\newtheorem{theorem}{Theorem}
\newtheorem{lemma}{Lemma}
\newtheorem{proposition}{Proposition}
\newtheorem{definition}{Definition}
\newcommand{\MAX}{\mathsf{MAX}}
\newcommand{\junk}[1]{}
\newcommand{\R}{\ensuremath{\mathbb{R}}} 
\newcommand{\wt}{{w}}
\newcommand{\mtd}{{\mathcal{M}}}
\newcommand{\gset}{{\mathcal{U}}}
\newcommand{\isets}{{\mathcal{I}}}
\newcommand{\distr}{{F}}
\newcommand{\dv}{{b}}
\newcommand{\vctr}[1]{{\mathbf{#1}}}
\newcommand{\expect}[1]{{\mathbb{E} \left[ #1 \right]}}
\newcommand{\opt}{{\mathsf{OPT}}}
\newcommand{\matrank}{{\operatorname{rank}}}
\newcommand{\matspan}{{\operatorname{cl}}}
\newcommand{\iseq}{\vctr{\sigma}}
\newcommand{\icopies}{{\isets^{\rm copies}}}
\newcommand{\mech}{{\mathcal{M}}}
\newcommand{\mcopies}{\mech^{\rm copies}}
\newcommand{\optcopies}{\opt^{\rm copies}}
\title{Matroid Prophet Inequalities}
\author{
Robert Kleinberg\thanks{Department of Computer Science, Cornell University.}
\and S.\ Matthew Weinberg\thanks{Department of Electrical Engineering and Computer Science, MIT.} 
}
\date{}
\begin{document}
\setcounter{page}{0}

\begin{titlepage}

\maketitle

\begin{abstract}
Consider a gambler who observes a sequence of independent, non-negative
random numbers and is allowed to stop the sequence at any time, claiming
a reward equal to the most recent observation.  
The famous prophet inequality of Krengel, Sucheston, and Garling
asserts that a gambler who knows the distribution of each random
variable can achieve at least half as much reward, in expectation,
as a ``prophet'' who knows the sampled values of each random 
variable and can choose the largest one.  We
generalize this result to the setting in which the
gambler and the prophet are allowed to make more than one selection, 
subject to a matroid constraint.  We show that the gambler
can still achieve at least half as much reward as the prophet;
this result is the best possible, since it is known that the
ratio cannot be improved even in the original prophet inequality,
which corresponds to the special case of rank-one matroids.
Generalizing the result still further, we show that under an
intersection of $p$ matroid constraints, the prophet's reward 
exceeds the gambler's by a factor of at most $O(p)$, and this factor is also tight.

Beyond their interest as theorems about pure online algoritms or 
optimal stopping rules, these results also have applications to
mechanism design.  Our results imply improved bounds on the 
ability of sequential posted-price mechanisms to approximate
Bayesian optimal mechanisms in both single-parameter and 
multi-parameter settings.  In particular, our results imply
the first efficiently computable 
constant-factor approximations to the Bayesian
optimal revenue in certain multi-parameter settings.
\end{abstract}

\renewcommand{\thepage}{}

\end{titlepage}

\section{Introduction}

In 1978, Krengel, Sucheston and Garling~\cite{KS78} 
proved a surprising and fundamental
result about the relative power of online and offline algorithms
in Bayesian settings.  
They showed that if
$X_1,X_2,\ldots,X_n$ is a sequence of independent, non-negative, 
real-valued random variables and $\expect{\max_i X_i} < \infty$,
then there exists a stopping rule $\tau$ such that 
\begin{equation} \label{eq:prophet-inequality}
2 \cdot \expect{X_\tau} \geq \expect{\max_i X_i}.
\end{equation}
In other words, if we consider a game in which a player
observes the sequence $X_1,X_2,\ldots,X_n$ and is allowed to
terminate the game at any time, collecting the most recently observed
reward, then a prophet who can foretell the entire sequence
and stop at its maximum value can gain at most twice as much
payoff as a player who must choose the stopping time based
only on the current and past observations.
The inequality~\eqref{eq:prophet-inequality} became the 
first\footnote{More precisely, it was the second
prophet inequality.  
The same inequality with a factor of 4, instead of 2, was
discovered a year earlier by Krengel and Sucheston~\cite{KS77}.}
of many ``prophet inequalities'' in optimal stopping theory.
Expressed in computer science terms, these inequalities
compare the performance of online algorithms versus 
the offline optimum for problems that involve
selecting one or more elements from a random sequence, 
in a Bayesian setting
where the algorithm knows the distribution from which the 
sequence will be sampled whereas the offline optimum 
knows the values of the samples themselves and chooses
among them optimally.  
Not surprisingly, these inequalities have important applications
in the design and analysis of algorithms, especially in 
algorithmic mechanism design, a connection that we discuss
further below.

In this paper, we prove a prophet inequality for matroids,
generalizing the original inequality~\eqref{eq:prophet-inequality}
which corresponds to the special case of rank-one matroids.  
More specifically, we 
analyze the following online selection problem.  
One is given
a matroid whose elements have random weights
sampled independently from (not necessarily identical)
probability distributions on $\R_+$.  An online algorithm,
initialized with knowledge of the matroid structure and of the
distribution of each element's weight, must select an 
independent subset of the matroid by observing the
sampled value of each element (in a fixed, prespecified
order) and making an immediate decision whether or not
to select it before observing the next element.
The algorithm's payoff is defined to be the sum of the
weights of the selected elements.  We prove in this
paper that for every matroid, there is an online algorithm
whose expected payoff is least half of the expected
weight of the maximum-weight basis.  
It is well known that the factor 2 in Krengel,
Sucheston, and Garling's inequality~\eqref{eq:prophet-inequality} 
cannot be improved
(see Section~\ref{sec:lb} for a lower bound example)
and therefore our result for matroids is the best 
possible, even in the rank-one case.

Our algorithm is quite simple.  At its heart lies a new
algorithm for achieving the optimal factor 2
in rank-one matroids: compute a threshold value
$T = \expect{\max_i X_i}/2$ and accept the first 
element whose weight exceeds this threshold.
This is very similar to the algorithm of Samuel-Cahn~\cite{S-C},
which uses a threshold $T$ such that $\Pr(\max_i X_i > T) = \tfrac12$
but is otherwise the same, and which also achieves the 
optimal factor 2.  It is hard to surpass
the elegance of Samuel-Cahn's proof, and indeed our proof,
though short and simple, is not as elegant.  On the other hand, our
algorithm for rank-one matroids has a crucial advantage
over Samuel-Cahn's:
it generalizes to arbitrary matroids without weakening
its approximation factor.  The generalization is as 
follows.  The algorithm pretends that the online selection
process is Phase 1 of a two-phase game; after each $X_i$ has been revealed in Phase 1 and the algorithm has accepted some set $A_1$, Phase 2 begins.
In Phase 2, a new weight will be sampled for every matroid element, independently
of the Phase 1 weights, and the algorithm will
play the role of the prophet on the Phase 2 weights, choosing the max-weight subset $A_2$ such that $A_1 \cup A_2$ is independent.
However, the payoff for choosing an element in Phase 2 is
only half of its weight. When observing element $i$ and
deciding whether to select it, our algorithm can be interpreted as making the choice
that would maximize its expected payoff if Phase 1
were to end immediately after making this decision and Phase 2 were to begin.  Of course, Phase 2 is purely fictional:
it never actually takes place, but it plays a key role in
both the design and the analysis of the algorithm.
Note that this algorithm, specialized to rank-one
matroids, is precisely the one proposed at the start 
of this paragraph:
the expected value of proceeding to Phase 2 without
selecting anything would be $T=\expect{\max_i X_i}/2$, 
hence our
algorithm picks an element if and only if its weight
exceeds $T$.

We next extend our algorithm to the case in which the 
feasibility constraint is given by a matroid intersection
rather than a single matroid.  For intersections of $p$
matroids, we present an online algorithm whose expected
payoff is at least $\tfrac{1}{4p-2}$ times the expected 
maximum weight of a feasible set.  The algorithm is
a natural extension of the one described earlier.
It again imagines a fictional Phase 2 in which new
independent random weights are sampled for all elements
and revealed simultaneously, and the payoff for selecting
an element in Phase 2 equals half of its weight.  This
time, we let $M_2$ denote the max-weight feasible set of Phase 2 elements, designate one of the $p$ matroids uniformly at random, and allow the algorithm to choose any $A_2 \subseteq M_2$ such that $A_1 \cup A_2$ is independent in the designated matroid. Observe that this is in fact a generalization of our algorithm for a single matroid, as enforcing $A_2 \subseteq M_2$ is a vacuous constraint for a single matroid.
In Section~\ref{sec:lb} we show that our result
for matroid intersections is almost tight: we present a lower
bound demonstrating that the ratio $4p-2$ cannot be improved
by more than a constant factor.

As mentioned earlier, Bayesian optimal mechanism design problems
provide a compelling application of prophet inequalities 
in computer science and economics.  In Bayesian optimal mechanism
design, one has a collection of $n$ agents with independent
private types sampled from known distributions, and the goal
is to design a mechanism for allocating resources and charging
prices to the agents, given their reported types, so as to 
maximize the seller's expected revenue in equilibrium.
Chawla et al.~\cite{CHMS} pioneered the study of approximation
guarantees for \emph{sequential posted pricings} (SPMs), a very simple
class of mechanisms in which the seller makes a sequence of
take-it-or-leave-it offers to the agents, with each offer 
specifying an item and a price that the agent must pay in order
to win the item.  Despite their simplicity, sequential posted
pricings were shown in~\cite{CHMS} to approximate the optimal 
revenue in many different settings.  Prophet inequalities
constitute a key technique underlying these results; instead
of directly analyzing the revenue of the SPM, one analyzes
the so-called \emph{virtual values} of the winning bids,
proving via prophet inequalities that the combined expected
virtual value accumulated by the SPM approximates the offline
optimum.  Translating this virtual-value approximation
guarantee into a revenue guarantee is an application
of standard Bayesian mechanism design techniques introduced
by Roger Myerson~\cite{Mye81}.  In the course of developing
these results, Chawla et al.\ prove a type of prophet
inequality for matroids that is of considerable interest
in its own right: they show that if the algorithm is allowed
to specify the order in which the matroid elements are observed,
then it can guarantee an expected payoff at least half as large
as the prophet's.  Our result can be seen as a strengthening
of theirs, achieving the same approximation bound without
allowing the algorithm to reorder the elements.  Unlike our
setting, in which the factor 2 is known to be tight, the
best known lower bound for 
algorithms that may reorder the elements is $\sqrt{\pi/2} \cong
1.25$.

Extending the aforementioned results from single-parameter to
multi-parameter domains, Chawla et al.\ define in~\cite{CHMS} 
a general class of multi-parameter
mechanism design problems called Bayesian multi-parameter unit-demand
(BMUMD).  SPMs in this setting are not truthful but can be modified
to yield mechanisms that approximate the Bayesian optimal
revenue with respect to a weaker solution concept: implementation
in undominated strategies.  A narrower class of mechanisms
called \emph{oblivious posted pricings} (OPMs) yields truthful
mechanisms, but typically with weaker approximation guarantees;
for example, it is not known whether OPMs can yield 
constant-factor approximations to the Bayesian optimal
revenue in matroid settings, except for special cases such
as graphic matroids.  Without resolving this question,
our results lead to an equally strong positive result for BMUMD: 
truthful mechanisms that 2-approximate the Bayesian optimal revenue
in matroid settings and $(4p-2)$-approximate it in settings
defined by an intersection of $p$ matroid constraints.

\subsection{Related work}
\label{sec:relwork}
The genesis of prophet inequalities in the work of
Krengel, Sucheston, and Garling~\cite{KS77,KS78}
was discussed earlier.  It would be impossible in this 
amount of space to do justice to the extensive literature
on prophet inequalities.  Of particular relevance to our
work are the so-called \emph{multiple-choice prophet
inequalities} in which either the gambler, the prophet,
or both are given the power to choose more than one
element.  While several papers have been written on
this topic, e.g.~\cite{kennedy85,kennedy87,kertz86},
the near-optimal solution of the most natural case,
in which both the gambler and the prophet have $k>1$
choices, was not completed until the work of Alaei~\cite{Alaei},
who gave a factor-$(1-1/\sqrt{k+3})^{-1}$ prophet inequality
for $k$-choice optimal stopping; a nearly-matching
lower bound of $1+\Omega(k^{-1/2})$ was already known
from prior work.

Research on the relationship between algorithmic mechanism
design and prophet inequalities was initiated by 
Hajiaghayi, Kleinberg, and Sandholm~\cite{HKS-aaai},
who observed that algorithms used in the derivation of
prophet inequalities, owing to their monotonicity
properties, could be interpreted as truthful online
auction mechanisms and that the prophet inequality 
in turn could be interpreted as the mechanism's 
approximation guarantee.  Chawla et al.~\cite{CHMS}
discovered a much subtler relation between the two
subjects: questions about the approximability of 
offline Bayesian optimal mechanisms by sequential
posted-price mechanisms could be translated into
questions about prophet inequalities, via the use
of virtual valuation functions.  A fuller discussion of their
contributions appears earlier in this section.  Recent work
by Alaei~\cite{Alaei} deepens still further the connections
between these two research areas,
obtaining a near-optimal $k$-choice prophet inequality
and applying it to a much more general Bayesian combinatorial
auction framework than that studied in~\cite{CHMS}.

While not directly related to our work, the 
matroid secretary problem~\cite{BIK} also concerns
relations between optimal stopping and matroids,
this time under  the assumption of a randomly ordered
input, rather than independent random numbers in a 
fixed order.  In fact, the ``hard examples'' for
many natural examples in the matroid-secretary
setting also translate into hard examples for the
prophet inequality setting.  In light of this 
relation, it is intriguing that our work solves
the matroid prophet inequality problem whereas the
matroid secretary problem remains unsolved, despite
intriguing progress on special cases~\cite{dimitrov,korula}, 
general matroids~\cite{soda12},
and relaxed versions of the problem~\cite{soto}.

Finally, the Bayesian online selection problem that we 
consider here can be formulated as an exponential-sized
Markov decision process, whose state reflects the entire
set of decisions made prior to a specified point during
the algorithm's execution.  Thus, our paper can be 
interpreted as a contribution to the growing CS literature
on approximate solutions of exponential-sized Markov
decision processes, e.g.~\cite{dgv,gm-switching,gm-restless}.
Most of these papers use LP-based techniques.
Combinatorial algorithms based on simple thresholding rules,
such as ours,
are comparatively rare although there are some other examples
in the literature on such problems, for example~\cite{goel}.

\section{Preliminaries}

\paragraph{Bayesian online selection problems.}
An instance of the Bayesian online selection problem (BOSP)
is specified by a ground set $\gset$, a downward-closed
set system $\isets \subseteq 2^{\gset}$, and for each
$x \in \gset$ a probability distribution $\distr_x$ 
supported on the set $\R_+$ of non-negative real numbers.
These data determine a probability distribution over
functions $\wt : \gset \to \R_+$, in which the random
variables $\{ \wt(x) \mid x \in \gset\}$ are independent 
and $\wt(x)$ has distribution $\distr_x$.  We refer to
$\wt(x)$ as the \emph{weight of $x$}, and we extend $\wt$
to an additive set function defined on $2^{\gset}$ by
$\wt(A) = \sum_{x \in A} \wt(x)$.  Elements of 
$\isets$ are called \emph{feasible sets}.  
For a given assignment of weights, $\wt$,
we let $\MAX(\wt)$ denote the maximum-weight
feasible set and $\opt(w)$ denotes its weight;
we will abbreviate these to $\MAX$ and $\opt$
when the weights $\wt$ are clear from context.

An \emph{input sequence} is a sequence $\iseq$ of ordered
pairs $(x_i,\wt_i)\; i=1,\ldots,n$, each belonging to  
$\gset \times \R_+$, such that every element of
$\gset$ occurs exactly once in the sequence $x_1,\ldots,x_n$.
A \emph{deterministic online selection algorithm}
is a function $A$ mapping every input sequence $\iseq$ to a 
set $A(\iseq) \in \isets$ such that for any two input
sequences $\iseq,\iseq'$ that match on the first $i$ pairs
$(x_1,w_1),\ldots,(x_i,w_i)$, the sets
$A_i(\iseq) = A(\iseq) \cap \{1,\ldots,i\}$
and $A_i(\iseq') = A(\iseq') \cap \{1,\ldots,i\}$ are
identical.  A \emph{randomized online
selection algorithm} is a probability distribution over
deterministic ones.  The algorithm's choices
define \emph{decision variables} $\dv_i(\iseq)$ which are
indicator functions of the events $x_i \in A(\iseq)$.
An algorithm is \emph{monotone} if increasing the value 
of $w_i$ (while leaving the rest of $\iseq$ unchanged) 
cannot decrease the value of $\expect{\dv_i(\iseq)}$, where
the expectation is over the algorithm's internal randomness
but not the randomness of $\iseq$ (if any).  A monotone deterministic
online selection algorithm can be completely described by a sequence
of thresholds $T_1(\iseq),\ldots,T_n(\iseq)$, where $T_i(\iseq) 
\in \R_+ \cup \{\infty\}$ is
the infimum of the set of weights $w$ 
such that $i \in A(\iseq')$ when $\iseq'$
is obtained from $\iseq$ by modifying $w_i$ to $w$.
Conversely, for any sequence of threshold functions
$T_1,\ldots,T_n$ such that $T_i(\iseq)$ depends
only on the first $i-1$ elements of $\iseq$ and
$T_i(\iseq) = \infty$ whenever $A_{i-1}(\iseq) \cup \{i\} 
\not\in \isets$, there is a corresponding
monotone deterministic online selection
algorithm that selects $x_i$ whenever $w_i \geq T_i(\iseq)$.

Notice that an algorithm as defined above is agnostic to the 
order in which the matroid elements will be presented, i.e.\ it
has a well-defined behavior no matter what order the elements
appear in the input sequence.  One could also consider
\emph{order-aware} algorithms that know the entire 
sequence $x_1,\ldots,x_n$ in advance 
(but not the weights $w_1,\ldots,w_n$).
In the matroid setting, our factor-2 prophet inequality for order-agnostic
algorithms reveals that order-aware algorithms have no advantage over
order-agnostic ones in the worst case; it is an interesting open
question whether the same lack of advantage holds more generally.


One can similarly distinguish between adversaries with respect
to their power to choose the ordering of the sequence. The original BOSP treated in previous work \cite{KS77,KS78} considers a \emph{fixed-order} adversary. That is, the adversary chooses an ordering (or distribution over orderings) for revealing the elements of $\gset$ \emph{without knowing any of the weights $w(x)$}. Our main result is an algorithm that achieves $\frac{1}{2}\opt$ (or $\frac{1}{4p-2}\opt$) against a fixed-order adversary. This result combined with the techniques of \cite{CHMS} immediately yields OPMs for single-parameter mechanism design. 
To extend our results to BMUMD, we must consider a stronger type of adversary.  There are many ways that an adversary could adapt to the sampled weights and/or the algorithm's decisions, some more powerful than others.  The type of adaptivity that is relevant to our paper will be called an \emph{online weight-adaptive} adversary. An online weight-adaptive adversary chooses the next element of $\gset$ to reveal one at a time. After choosing $x_1,\ldots,x_{i-1}$ and learning $w(x_1),\ldots,w(x_{i-1})$, the online weight-adaptive adversary chooses the next $x_i$ to reveal \emph{without knowing the weight $w(x_i)$} (or any weights besides $w(x_1),\ldots,w(x_{i-1})$). Fortunately, the same exact proof shows that our algorithm, without any modification, also achieves $\frac{1}{2}\opt$ (or $\frac{1}{4p-2}\opt$) against an online weight-adaptive adversary. The connection between BMUMD and online weight-adaptive adversaries is not trivial, and is explained in Section \ref{sec:opm}.

\paragraph{Matroids.}
A \emph{matroid} $\mtd$ consists of a ground set
$\gset$ and a nonempty downward-closed set system $\isets \subseteq
2^{\gset}$ satisfying the \emph{matroid exchange axiom}:
for all pairs of sets $I,J \in \isets$ such that 
$|I| < |J|$, there exists an element $x \in J$ such
that $I \cup \{x\} \in \isets$.  Elements of $\isets$
are called \emph{independent sets} when $(\gset,\isets)$
is a matroid.  A maximal independent set is called a
\emph{basis}.  If $A$ is a subset of $\gset$, its 
\emph{rank}, denoted by 
$\matrank(A)$, is the maximum cardinality
of an independent subset of $A$.  Its \emph{closure}
or \emph{span}, denoted by $\matspan(A)$, is the set
of all $x \in \gset$ such that $\matrank(A \cup \{x\}) = 
\matrank(A)$.  It is well known that the following greedy
algorithm selects a maximum-weight basis of a matroid:
number the elements of $\gset$ as $x_1,\ldots,x_n$ in 
decreasing order of weight, and select the set of all
$x_i$ such that $x_i \not\in \matspan(\{x_1,\ldots,x_{i-1}\})$.

\section{Algorithms for Matroids}
\label{sec:matroid}

In this section we prove our main theorem, asserting
the existence of algorithms whose expected reward
is at least $\frac12 \opt$ when playing against any online weight-adaptive adversary. Here is some intuition as to the considerations guiding the design of our algorithm. Imagine a prophet that is forced to start by accepting the set $A$, and let the \emph{remainder} of $A$ (denoted $R(A)$, defined formally in the following section) denote the subset that the restricted prophet adds to $A$. Let the \emph{cost} of $A$ (denoted $C(A)$, defined formally in the following section) denote the subset that the unrestricted prophet selected in place of $A$. Then the restricted prophet makes $\wt(A) + \expect{\wt(R(A))}$ in expectation, while the unrestricted prophet makes $\expect{\wt(C(A))} + \expect{\wt(R(A))}$. So if $A$ satisfies $\wt(A) \geq \frac{1}{\alpha} \expect{\wt(C(A))}$ for a small constant $\alpha$, it is not so bad to get stuck holding set $A$.  However, just because $A$ is not a bad set to start with does not mean we shouldn't accept anything that comes later.  After all, the empty set is not a bad set to start with. If we can choose $A$ in a way such that for any $V$ we reject with $A \cup V \in \isets$, $\wt(V) \leq \frac{1}{\alpha}\expect{\wt(R(A))}$, then $A$ is not a bad set to finish with. Simply put, we want to choose thresholds that are large enough to guarantee that $\wt(A)$ compares well to $\expect{\wt(C(A))}$, but small enough to guarantee that everything we reject is not too heavy. Indeed, the first step in our analysis is to define this property formally and show that an algorithm with this property obtains a $\frac{1}{\alpha}$-approximation.

\subsection{Detour: The rank-one case}
\label{sec:rank-one}

To introduce the ideas underlying our algorithm and its
analysis, we start with a very simple analysis of the
case of rank-one matroids.  This is the special case
of the problem in which the algorithm is only allowed to 
make one selection, i.e.~the same setting as the original
prophet inequality~\eqref{eq:prophet-inequality}.  Thus,
the algorithm given in this section can be regarded as
providing a new and simple proof of that inequality.

Let the random weights of the elements by denoted by
$X_1,\ldots,X_n$, and let $T = \expect{\max_i X_i}/2$. 
We will show that an algorithm that stops at the first
time $\tau$ such that $X_\tau \geq T$
makes at least $T$ in expectation. 
Let $p = \Pr[\max_i X_i \geq T]$. 
Then we get the following inequality, for any $x > T$:
$$\Pr[X_\tau > x] \geq (1-p) \sum_{i=1}^n \Pr[X_i > x]$$
This is true because with probability $1-p$ the algorithm accepts nothing, so with probability at least $(1-p)$ it has accepted nothing by the time it processes $X_i$. So the probability that the algorithm accepts $X_i$ and that $X_i > x$ is at least $(1-p)\Pr[X_i > x]$. It is also clear, by the union bound, that
$$\sum_{i=1}^n \Pr[X_i > x] \geq \Pr[\max_i X_i > x]$$
and therefore, for all $x > T$, 
$$\Pr[X_\tau > x] \geq (1-p)\Pr[\max_i X_i > x].$$
Now, observe that $\expect{\max_i X_i} = \int^T_0 \Pr[\max_i X_i > x] \, dx 
+ \int^\infty_T \Pr[\max_i X_i > x] \, dx = 2T$. 
As the first term is clearly at most $T$, 
the second term must be at least $T$. So finally, we write:
\begin{align*}
\mathbb{E}[X_\tau] &= \int_0^T \Pr[X_\tau > x] \, dx + 
     \int^\infty_T \Pr[X_\tau > x]\, dx \\
& \geq pT + (1-p) \int^\infty_T \Pr[\max_i X_i > x] \, dx \\
& \geq pT + (1-p)T = T = \frac12 \expect{\max_i X_i}
\end{align*}
which completes the proof of~\eqref{eq:prophet-inequality}.

\subsection{A property guaranteeing $\alpha$-approximation}

To design and analyze algorithms for general matroids, we begin by 
defining a property of a deterministic monotone algorithm that
we refer to as \emph{$\alpha$-balanced thresholds}.
In this section we prove that the expected reward of any such algorithm 
is at least $\frac{1}{\alpha} \opt$.  In the following section we 
construct an algorithm with 2-balanced thresholds,
completing the proof of the main theorem.

To define $\alpha$-balanced thresholds, we must first
define some notation.  Let $\wt, \wt' : \gset \to \R_+$ 
denote two assignments of weights to $\gset$, 
both sampled indepedently from the given distribution.
We consider running the algorithm on an input sequence
$\iseq = (x_1,\wt(x_1)), \ldots, (x_n, \wt(x_n))$
and comparing the value of its selected set,
$A=A(\iseq)$, with that of the basis $B$ that maximizes
$\wt'(B)$.  The matroid exchange axiom ensures
that there is at least one way to
partition $B$ into disjoint subsets $C,R$
such that $A \cup R$ is also a basis of $\mtd$.
(Consider adding elements of $B$ one-by-one to $A$,
preserving membership in $\isets$, until the two sets
have equal cardinality, and let $R$ be the set of 
elements added to $A$.)
Among all such partitions, let $C(A),R(A)$ denote
the one that maximizes $\wt'(R)$.

\begin{definition} \label{def:alphabeta}
For a parameter $\alpha>0$, a deterministic
monotone algorithm has \emph{$\alpha$-balanced thresholds}
if it has the following
property.  For every input sequence $\iseq$, if 
$A = A(\iseq)$ and $V$ is a set disjoint from $A$ 
such that $A \cup V \in \isets$, then
\begin{align}
\label{eq:alpha}
\sum_{x_i \in A} T_i(\iseq) & \geq 
\left( \frac{1}{\alpha} \right) \cdot \expect{\wt'(C(A))} \\
\label{eq:beta}
\sum_{x_i \in V} T_i(\iseq) & \leq 
\left(1-\frac{1}{\alpha} \right) \cdot \expect{\wt'(R(A))},
\end{align}
where the expectation is over the random choice of $\wt'$.
\end{definition}

\begin{proposition} \label{prop:alphabeta}
If a monotone algorithm has $\alpha$-balanced thresholds,
then it satisfies the following approximation
guarantee against online weight-adaptive adversaries:
\begin{equation} \label{eq:apx}
\expect{\wt(A)} \geq 
\frac{1}{\alpha} \opt.
\end{equation}
\end{proposition}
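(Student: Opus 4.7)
The plan is to combine \eqref{eq:alpha} and \eqref{eq:beta} using the monotone-threshold structure. Fix a realization of the weights $\wt$, write $\iseq$ for the induced input sequence, $A = A(\iseq)$, and $T_i = T_i(\iseq)$; let $\wt'$ be an independent copy with the same distribution, $B = B(\wt')$ the maximum-$\wt'$-weight basis, and $B = C(A) \cup R(A)$ the partition described preceding the definition. Monotonicity yields two pointwise facts: every $x_i \in A$ has $\wt_i \geq T_i$, so $\wt(A) \geq \sum_{x_i \in A} T_i$; and every $x_i$ lying in any valid extension $V$ of $A$ (i.e.\ $V \cap A = \emptyset$ and $A \cup V \in \isets$) has $T_i < \infty$ (since $A_{i-1} \cup \{x_i\} \subseteq A \cup V \in \isets$) and was not accepted, so $\wt_i \leq T_i$ and $\wt(V) \leq \sum_{x_i \in V} T_i$.

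Decompose $\wt(A)$ into its threshold part $\sum_{x_i \in A} T_i$ and its excess $\wt(A) - \sum_{x_i \in A} T_i$. By the first fact and \eqref{eq:alpha}, the threshold part is at least $\tfrac{1}{\alpha}\expect{\wt'(C(A))}$ pointwise in $\wt$. Since $\opt = \expect{\wt'(C(A))} + \expect{\wt'(R(A))}$, obtaining $\expect{\wt(A)} \geq \tfrac{1}{\alpha}\opt$ reduces to lower-bounding the expected excess by $\tfrac{1}{\alpha}\expect{\wt'(R(A))}$.

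To bound the excess, I plan to use \eqref{eq:beta} with the random valid extension $V = R(A, \wt')$: for each realization of $(\wt, \wt')$, the second pointwise fact and \eqref{eq:beta} together yield $\wt(R(A, \wt')) \leq \sum_{x_i \in R(A, \wt')} T_i \leq (1-\tfrac{1}{\alpha})\expect{\wt^*(R(A))}$ with $\wt^*$ a further independent draw, and averaging over $(\wt, \wt')$ gives the cross-weight bound $\expect{\wt(R(A, \wt'))} \leq (1-\tfrac{1}{\alpha})\expect{\wt'(R(A))}$. Converting this cross-weight bound into the required excess bound uses the i.i.d.\ symmetry $(\wt, \wt') \stackrel{d}{=} (\wt', \wt)$ together with a matroid-exchange argument that pairs accepted elements $x_i \in A$ with elements of the $R$-set across the two weight vectors. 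The main obstacle is this translation step: correctly setting up the bijection so that the excess $\wt_i - T_i$ of each accepted element can be charged against an $\tfrac{1}{\alpha}$-fraction of the $\wt'$-weight of a paired element of $R(A, \wt')$, and then aggregating these per-element inequalities without losing a constant factor.
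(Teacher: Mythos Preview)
Your decomposition into a threshold part and an excess part is correct, and your handling of the threshold part via \eqref{eq:alpha} matches the paper. The gap is entirely in the excess bound. The cross-weight inequality you derive,
\[
\expect{\wt(R(A,\wt'))} \;\leq\; \Bigl(1-\tfrac{1}{\alpha}\Bigr)\,\expect{\wt'(R(A))},
\]
is valid but points the wrong way: it upper-bounds the $\wt$-weight of \emph{rejected} elements, whereas the excess $\sum_{x_i\in A}(\wt_i-T_i)$ lives on the \emph{accepted} elements. The ``translation step'' you flag as the main obstacle is not a technicality --- it is the whole difficulty, and the tools you propose do not close it. Global symmetry $(\wt,\wt')\stackrel{d}{=}(\wt',\wt)$ swaps $A(\wt)$ for $A(\wt')$ simultaneously, so it never relates the excess on $A(\wt)$ to an $R$-set computed from that same accepted set. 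And a matroid-exchange bijection between $A$ and $R(A,\wt')$ need not exist, since $|A|+|R(A,\wt')|$ equals the rank of the matroid rather than $2|A|$.

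The paper's argument uses an \emph{element-wise} swap that hinges on the online property. Because every rejected index has $(\wt_i-T_i)^+=0$, the excess equals the full sum $\sum_{i=1}^n (\wt_i-T_i)^+$. Since the adversary is only weight-adaptive, $T_i$ is determined by $(x_1,\wt_1),\ldots,(x_{i-1},\wt_{i-1})$ and is therefore independent of both $\wt_i$ and $\wt'(x_i)$; as these are identically distributed, $\expect{(\wt_i-T_i)^+}=\expect{(\wt'(x_i)-T_i)^+}$ for each $i$ separately. Summing over all $i$ and then restricting to $R(A)$ yields
\[
\expect{\sum_{x_i\in A}(\wt_i-T_i)} \;\geq\; \expect{\sum_{x_i\in R(A)}(\wt'(x_i)-T_i)^+} \;\geq\; \expect{\wt'(R(A))}-\expect{\sum_{x_i\in R(A)}T_i},
\]
and now \eqref{eq:beta} with $V=R(A)$ bounds the last term by $(1-\tfrac{1}{\alpha})\,\expect{\wt'(R(A))}$, giving exactly the required $\tfrac{1}{\alpha}\,\expect{\wt'(R(A))}$. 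Your route applies \eqref{eq:beta} only after first invoking the pointwise bound $\wt(V)\leq\sum_{x_i\in V}T_i$, which discards the link to $\wt'(R(A))$; the paper applies \eqref{eq:beta} directly to $\sum_{x_i\in R(A)}T_i$ after the element-wise swap has already placed $\wt'$ where it is needed.
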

\begin{proof}
We have
\begin{equation} \label{eq:opt}
\opt = \expect{\wt'(C(A)) + \wt'(R(A))}
\end{equation}
because $C(A) \cup R(A)$ is a maximum-weight basis with respect
to $\wt'$, and $\wt'$ has the same distribution as $\wt$.
For any real number $z$, we will use the 
notation $(z)^+$ to denote $\max \{z,0\}$.  
The proof will consist of deriving the following three inequalities,
in which $w_i$ stands for $w(x_i)$.
\begin{align} \label{eq:step1}
\expect{\sum_{x_i \in A} T_i} & \geq \frac{1}{\alpha} \expect{\wt'(C(A))} \\ 
\label{eq:step2}
\expect{\sum_{x_i \in A} (\wt_i - T_i)^+} & \geq 
\expect{\sum_{x_i \in R(A)} (\wt'(x_i) - T_i)^+} \\
\label{eq:step3}
\expect{\sum_{x_i \in R(A)} (\wt'(x_i) - T_i)^+} & \geq
\frac{1}{\alpha} \expect{\wt'(R(A))}.
\end{align}
Summing~\eqref{eq:step1}-\eqref{eq:step3} and using the 
fact that $T_i + (\wt_i-T_i)^+ = \wt_i$ for all $x_i \in A$, we obtain
\begin{equation*}
\expect{\wt(A)} \geq \frac{1}{\alpha}
\expect{\wt'(C(A))} + \frac{1}{\alpha}
\expect{\wt'(R(A))}.
\end{equation*}

Inequality~\eqref{eq:step1} 
is a restatement of
the definition
of $\alpha$-balanced thresholds.
Inequality~\eqref{eq:step2} is deduced from the following observations.
First, the algorithm selects every $i$ such that $\wt_i > T_i$, so
$\sum_{x_i \in A} (\wt_i - T_i)^+ = \sum_{i=1}^n (\wt_i - T_i)^+$.
Second, the online property of the algorithm and the fact that weight-adaptive adversaries do not learn $\wt_i$ before choosing to reveal $x_i$ imply that $T_i$
depends only on $(x_1,\wt_1),\ldots,(x_{i-1},\wt_{i-1})$ and that the
random variables $\wt(x_i), \wt'(x_i), T_i$ are independent.
As $\wt_i = \wt(x_i)$ and $\wt'(x_i)$ are identically distributed, it
follows that 
\[
\expect{\sum_{i=1}^n (\wt_i-T_i)^+} =
\expect{\sum_{i=1}^n (\wt'(x_i) - T_i)^+} \geq
\expect{\sum_{x_i \in R(A)} (\wt'(x_i) - T_i)^+},
\]
and~\eqref{eq:step2} is established.  Finally, we
apply Property~\eqref{eq:beta} of $\alpha$-balanced
thresholds,
using the set $V = R(A)$, to deduce that
\begin{align*}
\expect{\sum_{x_i \in R(A)} \wt'(x_i)} & \leq
\expect{\sum_{x_i \in R(A)} T_i} + 
\expect{\sum_{x_i \in R(A)} (\wt'(x_i) - T_i)^+}  \\
& \leq
\left( 1 - \frac{1}{\alpha} \right) \, \expect{\sum_{x_i \in R(A)} \wt'(x_i)}
+
\expect{\sum_{x_i \in R(A)} (\wt'(x_i) - T_i)^+}  \\
\frac{1}{\alpha} \expect{\sum_{x_i \in R(A)} \wt'(x_i)} & \leq
\expect{\sum_{x_i \in R(A)} (\wt'(x_i) - T_i)^+} 
\end{align*}
Consequently~\eqref{eq:step3} holds, which concludes the proof.
\end{proof}
\subsection{Achieving 2-balanced thresholds}
\label{sec:2-balanced}

This section presents an algorithm with 2-balanced
thresholds.  The algorithm is quite simple.  In step
$i$, having already selected the (possibly empty)
set $A_{i-1}$, we set threshold $T_i = \infty$ if 
$A_{i-1} \cup \{x_i\} \not\in \isets$, and otherwise
\begin{align} \label{eq:ti-r}
T_i &= \tfrac12 \expect{\wt'(R(A_{i-1})) - \wt'(R(A_{i-1} \cup \{x_i\}))}\\
&= \tfrac12 \expect{\wt'(C(A_{i-1} \cup \{x_i\})) - 
\wt'(C(A_{i-1}))}
\label{eq:ti-c}
\end{align}
The algorithm selects element $x_i$ if and only if $w_i \geq T_i$.
The fact that both~\eqref{eq:ti-r} and~\eqref{eq:ti-c} define the
same value of $T_i$ is easy to verify.  Let $B$ denote the
maximum weight basis of $\mtd$ with weights $\wt'$.
\begin{align*}
\wt'(C(A_{i-1})) + \wt'(R(A_{i-1})) =
\wt'(B) & =
\wt'(C(A_{i-1} \cup \{x_i\})) +
\wt'(R(A_{i-1} \cup \{x_i\})) \\
\wt'(R(A_{i-1})) - \wt'(R(A_{i-1} \cup \{x_i\})) &= 
\wt'(C(A_{i-1} \cup \{x_i\})) - \wt'(C(A_{i-1})) 
\end{align*}

Property~\eqref{eq:alpha} in the definition of $\alpha$-balanced
thresholds follows from a telescoping sum.
\begin{align*}
\sum_{x_i \in A} T_i & = 
\tfrac12 \sum_{x_i \in A} \expect{\wt'(C(A_{i-1} \cup \{x_i\})) - 
\wt'(C(A_{i-1}))} \\
& = 
\tfrac12 \sum_{x_i \in A} \expect{\wt'(C(A_i)) - \wt'(C(A_{i-1}))} \\
& = 
\tfrac12 \expect{\wt'(C(A_n)) - \wt'(C(A_0))} =
\tfrac12 \expect{\wt'(C(A))}.
\end{align*}

The remainder of this section is devoted to proving
Property~\eqref{eq:beta} in the definition of $\alpha$-balanced
thresholds.  In the present context, with $\alpha=2$ and thresholds $T_i$ 
defined by~\eqref{eq:ti-r}, the property simply asserts that
for every pair of disjoint sets $A,V$ such that 
$A \cup V \in \isets$,
\begin{align*}
\expect{\sum_{x_i \in V} \wt'(R(A_{i-1})) - \wt'(R(A_{i-1} \cup \{x_i\}))} 
 = 2 \sum_{x_i \in V} T_i(\iseq)  & \leq 
\expect{\wt'(R(A))} 
\end{align*}
We will show, in fact, that this inequality holds 
for every non-negative weight assignment $\wt'$ and not merely in expectation.
The proof appears in Proposition~\ref{prop:done} below.  
To establish it, we will need some basic properties of matroids.
\begin{definition}[\cite{schrijver-b}, Section 39.3] 
\label{def:del-contr-trunc}
If $\mtd$ is a matroid and $S$ is a subset
of its ground set, 
the \emph{deletion} $\mtd - S$ and the 
\emph{contraction} $\mtd / S$ are two matroids with ground set
$\gset - S$.  A set $T$ is independent in $\mtd - S$ 
if $T$ is independent in $\mtd$, whereas $T$ is independent in
$\mtd/S$ if $T \cup S_0$ is independent in $\mtd$, where 
$S_0$ is any maximal independent subset of $S$.
\end{definition}
\begin{lemma} \label{lem:matching}
Suppose $\mtd = (\gset,\isets)$ is a matroid 
and $V,R \in \isets$ are two independent sets of equal cardinality.
\begin{enumerate*}
\item \label{unweighted}
There is a bijection
$\phi: V \to R$ such that for every $v \in V$,
$(R - \{\phi(v)\}) \cup \{v\}$
is an independent set.
\item \label{weighted}
For a weight function $\wt' : \gset \to \R$, 
suppose that $R$ has the maximum weight of all 
$|R|$-element independent subsets of $V \cup R$.
Then the bijection $\phi$ in part~\ref{unweighted}
also satisfies $\wt'(\phi(v)) \geq \wt'(v)$.
\end{enumerate*}
\end{lemma}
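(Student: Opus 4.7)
My plan is to reduce both parts to standard facts about bases of a matroid, by passing to a truncated restriction of $\mtd$. Let $\mtd''$ denote the rank-$|V|$ truncation of the restriction $\mtd | (V \cup R)$; its ground set is $V \cup R$ and its independent sets are the $\mtd$-independent subsets of $V \cup R$ of size at most $|V|$. In $\mtd''$, both $V$ and $R$ are bases, and any size-$|R|$ subset of $V \cup R$ is independent in $\mtd''$ if and only if it is independent in $\mtd$.

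For part (1), I invoke Brualdi's symmetric basis-exchange theorem on the bases $V$ and $R$ of $\mtd''$ to obtain a bijection $\phi : V \setminus R \to R \setminus V$ such that $(R \setminus \{\phi(v)\}) \cup \{v\}$ is a basis of $\mtd''$, and therefore $\mtd$-independent, for every $v \in V \setminus R$. Extending $\phi$ by the identity on $V \cap R$ gives a bijection $V \to R$ with the claimed property, since for $v \in V \cap R$ we have $(R \setminus \{v\}) \cup \{v\} = R$, which is independent.

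For part (2), I argue that the same $\phi$ automatically satisfies the weight condition under the max-weight hypothesis on $R$. That hypothesis is exactly the statement that $R$ is a max-weight basis of $\mtd''$. For $v \in V \cap R$, $\wt'(\phi(v)) = \wt'(v)$ trivially. For $v \in V \setminus R$, let $C_v$ denote the fundamental circuit of $v$ with respect to $R$ in $\mtd''$. The elements $r \in R$ for which $(R \setminus \{r\}) \cup \{v\}$ is independent in $\mtd''$ are precisely those in $C_v \setminus \{v\}$, so $\phi(v) \in C_v \setminus \{v\}$ by construction. For any such $r$, $(R \setminus \{r\}) \cup \{v\}$ is a basis of $\mtd''$; if $\wt'(v) > \wt'(r)$ this basis would strictly outweigh $R$, contradicting the maximality of $R$. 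Hence $\wt'(r) \geq \wt'(v)$ for every $r \in C_v \setminus \{v\}$, and in particular $\wt'(\phi(v)) \geq \wt'(v)$.

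The only conceptual subtlety is the reduction to $\mtd''$: since $V$ and $R$ are only independent sets of $\mtd$ and not necessarily bases, we must first pass to a matroid in which they are simultaneously bases before we can apply Brualdi's theorem. Once this reduction is in place, both parts follow from the basis-exchange theorem and the standard max-weight basis characterization, with no further calculation required.
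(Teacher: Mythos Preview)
Your proof is correct and follows essentially the same approach as the paper: part~\ref{unweighted} is Brualdi's basis-exchange bijection (the paper simply cites it as Corollary~39.12a in Schrijver), and part~\ref{weighted} is the observation that $(R \setminus \{\phi(v)\}) \cup \{v\}$, being an $|R|$-element independent subset of $V \cup R$, cannot outweigh $R$. Your reduction to the truncated restriction $\mtd''$ and the fundamental-circuit phrasing in part~\ref{weighted} add detail the paper omits, but the underlying argument is the same---indeed, once part~\ref{unweighted} gives you independence of $(R \setminus \{\phi(v)\}) \cup \{v\}$, the weight comparison is immediate without ever naming the circuit $C_v$.
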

\begin{proof}
Part~\ref{unweighted} is Corollary 39.12a in~\cite{schrijver-b}.
To prove part~\ref{weighted}, simply observe that
the weight of $(R - \{\phi(v)\}) \cup \{v\}$ cannot
be greater than the weight of $R$, by our assumptions on $R$
and $\phi$.
\end{proof}

The next two lemmas establish basic properties of 
the function $S \mapsto R(S)$.

\begin{lemma} \label{lem:greedy-r}
For any independent set $A$, the set $R(A)$ is equal to the 
maximum weight basis of $\mtd/A$.
\end{lemma}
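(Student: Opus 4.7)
The plan is to prove a matching pair of inequalities between $\wt'(R(A))$ and the weight of a max-weight basis $R^*$ of $\mtd/A$. One direction is essentially automatic: by construction $R(A) \subseteq B$ and $R(A) \cap A = \emptyset$, and $A \cup R(A)$ is a basis of $\mtd$, so $R(A)$ is a basis of $\mtd/A$, whence $\wt'(R(A)) \leq \wt'(R^*)$. For the reverse inequality I would exhibit a max-weight basis of $\mtd/A$ entirely contained in $B$, since such a basis is then a legal choice of $R$ in the partition defining $R(A)$, forcing $\wt'(R(A)) \geq \wt'(R^*)$.

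To produce such an in-$B$ max-weight basis of $\mtd/A$, I would run the standard matroid greedy on $\mtd/A$: process the elements of $\gset - A$ in decreasing $\wt'$-order and add $x$ to the current running set $R_x$ whenever $A \cup R_x \cup \{x\}$ is independent in $\mtd$. Matroid greedy theory guarantees the output is a max-weight basis of $\mtd/A$, so the lemma reduces to showing that this greedy never selects an element outside $B$. If $x \notin B$, then in the greedy computation of $B$ the element $x$ was rejected, so $x \in \matspan_\mtd(B_{>x})$, where $B_{>x}$ denotes the subset of $B$ with $\wt'$-weight strictly larger than $\wt'(x)$. The key subclaim is then $B_{>x} \subseteq \matspan_\mtd(A \cup R_x)$, from which $x \in \matspan_\mtd(A \cup R_x)$ and so greedy-for-$\mtd/A$ rejects $x$ as well.

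The subclaim is handled case-by-case on $b \in B_{>x}$: if $b \in A$ then trivially $b \in A \cup R_x$; otherwise $b$ was processed by greedy-for-$\mtd/A$ before $x$, and was either added (so $b \in R_x$) or rejected (in which case $A \cup R_b \cup \{b\}$ was dependent in $\mtd$, placing $b \in \matspan_\mtd(A \cup R_b) \subseteq \matspan_\mtd(A \cup R_x)$). Putting it together, greedy-for-$\mtd/A$ outputs a max-weight basis $\hat R \subseteq B$ of $\mtd/A$, giving $\wt'(R(A)) \geq \wt'(\hat R) = \wt'(R^*)$. Combined with the easy direction, $R(A)$ achieves the max-weight-basis weight of $\mtd/A$ and is itself a basis of $\mtd/A$, so it is a max-weight basis as claimed. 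The main obstacle I anticipate is bookkeeping with weight ties: I would fix a single global tie-breaking rule and use it in both the greedy-for-$B$ and greedy-for-$\mtd/A$ runs so that ``processed before $x$'' has a consistent meaning across them, making the comparison between $B_{>x}$ and $R_x$ rigorous.
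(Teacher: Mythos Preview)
Your proof is correct and follows essentially the same approach as the paper: both arguments show that the greedy algorithm on $\mtd/A$ outputs a subset of $B$, which then forces $R(A)$ to coincide with the max-weight basis of $\mtd/A$. The paper's version is a terse one-liner (if greedy-for-$\mtd/A$ accepts $y_i$ then $y_i \notin \matspan(A \cup \{y_1,\ldots,y_{i-1}\}) \supseteq \matspan(\{\text{elements preceding }y_i\text{ in the global order}\})$, hence $y_i \in B$), whereas you argue the contrapositive with an explicit case analysis on $B_{>x}$ and are more careful about tie-breaking; the underlying idea is the same.
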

\begin{proof}
Let $B$ be the maximum-weight basis of $\mtd$.
Among all bases of $\mtd/A$ that are contained in $B$,
the set $R(A)$ is, by definition, the one of maximum
weight.  Therefore, if it is not the maximum-weight
basis of $\mtd/A$, the only reason can be that there
is another basis of $\mtd/A$, not contained in $B$,
having strictly greater weight.  But we know that the
maximum-weight basis of $\mtd/A$ is selected by the
greedy algorithm, which iterates through the list
$y_1,\ldots,y_k$ of elements of $\gset - A$ sorted
in order of decreasing weight, and picks each element
$y_i$ that is not contained in $\matspan(A \cup \{y_1,\ldots,y_{i-1}\})$.
In particular, every $y_i$ chosen by the greedy algorithm
on $\mtd/A$ satisfies $y_i \not\in \matspan(\{y_1,\ldots,y_{i-1}\})$
and therefore belongs to $B$.  Thus the maximum-weight
basis of $\mtd/A$ is contained in $B$ and must equal $R(A)$.
\end{proof}
\begin{lemma} \label{lem:submod}
For any independent set $J$, the function $f(S) = \wt'(R(S))$
is a submodular set function on subsets of $J$.
\end{lemma}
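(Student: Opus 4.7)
The plan is to reduce submodularity of $f$ to the classical submodularity of the matroid rank function, by expressing $f$ as a nonnegative integral of threshold-wise rank differences. By Lemma~\ref{lem:greedy-r}, $f(S) = \wt'(R(S))$ is the weight of a maximum-weight basis of the contracted matroid $\mtd/S$. The greedy algorithm recalled in the preliminaries gives, for any matroid $\mathcal{N}$ with non-negative weight function $\wt'$ and greedy basis $B$, the pointwise identity $|B \cap U_t| = \matrank_{\mathcal{N}}(U_t)$ for every $t \geq 0$, where $U_t = \{e : \wt'(e) > t\}$: every element of $U_t \setminus B$ is spanned by heavier greedy picks (all of which lie in $U_t$), so $B \cap U_t$ is a basis of the restriction $\mathcal{N}|_{U_t}$. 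A Fubini-style calculation then yields
\[
\wt'(B) \;=\; \sum_{e \in B} \int_0^{\wt'(e)} dt \;=\; \int_0^\infty |B \cap U_t|\, dt \;=\; \int_0^\infty \matrank_{\mathcal{N}}(U_t)\, dt.
\]

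Applying this identity with $\mathcal{N} = \mtd/S$, together with the standard contraction formula $\matrank_{\mtd/S}(T) = \matrank_{\mtd}(T \cup S) - |S|$ (valid for $T$ disjoint from the independent set $S$), I obtain
\[
f(S) \;=\; \int_0^\infty \bigl[\matrank_{\mtd}(U_t \cup S) - |S|\bigr]\, dt, \qquad U_t = \{e \in \gset : \wt'(e) > t\}.
\]
The integral is finite because the integrand is nonnegative and vanishes once $U_t = \emptyset$.

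For each fixed $t$, the integrand $g_t(S) := \matrank_{\mtd}(U_t \cup S) - |S|$ is submodular in $S$ on $2^J$. Submodularity of the first term follows from submodularity of the matroid rank function together with the identities $(U_t \cup A) \cup (U_t \cup B) = U_t \cup (A \cup B)$ and $(U_t \cup A) \cap (U_t \cup B) = U_t \cup (A \cap B)$, while $|S|$ is modular; hence the difference is submodular. Since a nonnegative integral of submodular set functions is itself submodular, $f$ is submodular on $2^J$. I anticipate no significant obstacle: all ingredients are classical matroid facts, and the integral representation reduces the problem to pointwise submodularity of $g_t$, which is essentially immediate.
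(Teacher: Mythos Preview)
Your proof is correct but follows a genuinely different route from the paper's. The paper argues submodularity directly at the level of marginal decreases: after contracting $S$, it compares two parallel runs of the greedy algorithm (for $\mtd/\{x\}$ versus $\mtd$, and for $\mtd/\{x,y\}$ versus $\mtd/\{y\}$) and identifies $f(\emptyset)-f(x)$ and $f(y)-f(x+y)$ as the weights of specific excluded elements $b_i$ and $b_j$, then shows $j \leq i$ (hence $\wt'(b_j) \geq \wt'(b_i)$) because the circuit that forces exclusion when contracting $\{x,y\}$ cannot appear later than the one for contracting $\{x\}$ alone. Your approach instead invokes the layer-cake/threshold identity $\wt'(\text{max-weight basis of }\mathcal{N}) = \int_0^\infty \matrank_{\mathcal{N}}(U_t)\,dt$ together with the contraction rank formula, reducing the claim to the classical submodularity of the matroid rank function applied to the sets $U_t \cup S$. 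What you gain is conceptual cleanliness and a one-line reduction to textbook facts; what the paper's argument gains is a concrete combinatorial picture of \emph{which} element accounts for each marginal loss, and it avoids any appeal to integral representations or the rank-submodularity lemma. Both are short; yours is the more ``structural'' proof.
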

\begin{proof}
For notational convenience, in this proof we will denote the
union of two sets by `$+$' rather than `$\cup$'.  Also, we will
not distinguish between an element $x$ and the singleton set $\{x\}$.

To prove
submodularity it suffices to consider an independent set
$S+x+y$ and to prove that $f(S)-f(S+x) \leq f(S+y)-f(S+x+y)$.
Replacing $\mtd$ by $\mtd/S$, we can reduce to the case that
$S = \emptyset$ and prove that $f(\emptyset)-f(x) \leq
f(y) - f(x+y)$ whenever $\{x,y\}$ is a two-element independent
set.

What is the interpretation of $f(\emptyset)-f(x)$?  Recall
that $f(\emptyset)=\wt'(R(\emptyset))$ is the weight of the
maximum-weight basis $B$ of $\mtd$.  Similarly, $f(x)$ is the 
weight of the maximum-weight basis $B_x$ of $\mtd/\{x\}$.  
Let $b_1,b_2,\ldots,b_r$ denote the elements of $B$ in 
decreasing order of weight.  Consider running two 
executions of  the greedy algorithm to select $B$ and $B_x$
in parallel.  The only step in which the algorithms make
differing decisions is the first step $i$ in which 
$\{b_1,\ldots,b_i\} \cup \{x\}$ contains a circuit.
In this step, $b_i$ is included in $B$ but excluded from $B_x$.
Similarly, when we run two executions of the greedy algorithm
to select $B_y$ and $B_{xy}$ --- the maximum-weight bases of
$\mtd/\{y\}$ and $\mtd/\{x,y\}$, respectively --- the
only step in which differing decisions are made is the earliest
step $j$ in which $\{b_1,\ldots,b_j\} \cup \{x,y\}$ contains a 
circuit.  But $j$ certainly cannot be later than $i$, since
$\{b_1,\ldots,b_i\} \cup \{x,y\}$ is a superset of
$\{b_1,\ldots,b_i\} \cup \{x\}$ and hence contains a circuit.
We may conclude that
\[
f(\emptyset) - f(x) = b_i \leq b_j = f(y) - f(x+y),
\]
and hence $f$ is submodular as claimed.
\end{proof}
\begin{proposition} \label{prop:done}
For any disjoint sets $A,V$ such that $A \cup V \in \isets$,
$$
\sum_{x_i \in V} \wt'(R(A_{i-1})) - \wt'(R(A_{i-1} \cup \{x_i\}))
\leq
\wt'(R(A)).
$$
\end{proposition}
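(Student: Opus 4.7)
The plan is to combine the submodularity of $f(S) := \wt'(R(S))$ established in Lemma~\ref{lem:submod} with a telescoping argument. Recall that the diminishing-returns form of submodularity says that for $S \subseteq T$ and $x \notin T$,
$$f(S) - f(S \cup \{x\}) \leq f(T) - f(T \cup \{x\}),$$
i.e., since $f$ is non-increasing, the (non-negative) marginal decrease caused by adding an element is itself \emph{non-decreasing} in the base set. This is exactly the right direction to enlarge the various $A_{i-1}$ appearing on the left-hand side into a sequence of sets whose consecutive differences telescope.

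Concretely, I would enumerate $V = \{v_1, \ldots, v_k\}$ in the order its elements are processed by the algorithm, set $V_j = \{v_1, \ldots, v_j\}$, and let $i_j$ denote the step at which $v_j$ is processed. Since the algorithm only grows its selected set, $A_{i_j - 1} \subseteq A$, and therefore $A_{i_j - 1} \subseteq A \cup V_{j-1}$. The set $A \cup V$ is independent by hypothesis, so by downward-closure of $\isets$ each $A \cup V_j$ is independent as well, and Lemma~\ref{lem:submod} applies to $f$ on subsets of $A \cup V$. Applying the submodular inequality above with $S = A_{i_j - 1}$, $T = A \cup V_{j-1}$, and $x = v_j$ gives
$$f(A_{i_j - 1}) - f(A_{i_j - 1} \cup \{v_j\}) \leq f(A \cup V_{j-1}) - f(A \cup V_j).$$
Summing over $j = 1, \ldots, k$, the right-hand side telescopes to $f(A) - f(A \cup V)$. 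Since $f(A \cup V) = \wt'(R(A \cup V)) \geq 0$, the total is bounded by $f(A) = \wt'(R(A))$, which is exactly the claim.

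Given that the real work has already been packaged into Lemma~\ref{lem:submod}, I do not expect any substantive obstacle; the only fiddly bit is keeping the direction of the submodular inequality straight (for a non-increasing submodular function the marginal \emph{decrease} grows with the base set), and verifying that every set in the argument lies inside the independent set $A \cup V$ so that Lemma~\ref{lem:submod} actually applies. If this telescoping approach ran into trouble, a natural fallback would be to use Lemma~\ref{lem:matching}(\ref{weighted}) in the contracted matroid $\mtd/A$ to produce a weight-monotone bijection from an extension of $V$ into $R(A)$ and bound each marginal decrease elementwise, but the submodular argument above seems strictly cleaner.
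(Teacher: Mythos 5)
Your proposal is correct, and it reaches the bound by a genuinely different (and lighter) route than the paper's own proof. Both arguments rest on Lemma~\ref{lem:submod}, but they deploy it differently. The paper applies the diminishing-returns inequality termwise against the \emph{fixed} superset $A$, obtaining $\sum_{x_i \in V} \bigl( \wt'(R(A_{i-1})) - \wt'(R(A_{i-1} \cup \{x_i\})) \bigr) \leq \sum_{x \in V} \bigl( \wt'(R(A)) - \wt'(R(A \cup \{x\})) \bigr)$, and then needs a second ingredient --- Lemma~\ref{lem:matching} applied in $\mtd/A$ to $V$ and $R(A)$ --- to bound each marginal $\wt'(R(A)) - \wt'(R(A \cup \{x\}))$ by $\wt'(\phi(x))$ and sum over the matching into $R(A)$. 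You instead compare the $j$-th marginal against the growing superset $A \cup V_{j-1}$ along the chain $A = A \cup V_0 \subseteq A \cup V_1 \subseteq \cdots \subseteq A \cup V$, so the right-hand side telescopes to $f(A) - f(A \cup V) \leq f(A) = \wt'(R(A))$, using only non-negativity of $\wt'$. The hypotheses all check out: every set involved lies inside the independent set $A \cup V$, so Lemma~\ref{lem:submod} applies with $J = A \cup V$ (and the nested-set form of diminishing returns follows from the single-element form by the usual chaining, which the paper's own step already requires); moreover $A_{i_j-1} \subseteq A \subseteq A \cup V_{j-1}$ and $v_j \notin A \cup V_{j-1}$, and your direction of the submodular inequality matches the one proved in Lemma~\ref{lem:submod}. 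Your route dispenses with Lemma~\ref{lem:matching} entirely and in fact establishes the slightly stronger intermediate bound $\wt'(R(A)) - \wt'(R(A \cup V))$, since $\sum_{x \in V} \bigl( f(A) - f(A \cup \{x\}) \bigr) \geq f(A) - f(A \cup V)$ for submodular $f$; it also quietly sidesteps a small imprecision in the paper, whose final step writes $\sum_{x \in V} \wt'(\phi(x)) = \wt'(R)$ even though $\phi$ is only an injection into $R(A)$ when $|V| < |R(A)|$ (the correct relation there is $\leq$). What the paper's matching step buys is a finer per-element statement --- each rejected $x$ is charged to a distinct element $\phi(x) \in R(A)$ with $\wt'(\phi(x)) \geq \wt'(x)$ --- but for the proposition as stated that extra structure is not needed, and your telescoping argument is the cleaner proof.
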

\begin{proof}
The function $f(S) = \wt'(R(S))$ is submodular
on subsets $S \subseteq A \cup V$, by Lemma~\ref{lem:submod}.
Hence 
\begin{equation} \label{eq:done-1}
\sum_{x_i \in V} \wt'(R(A_{i-1})) - \wt'(R(A_{i-1} \cup \{x_i\}))
\leq
\sum_{x \in V} \wt'(R(A)) - \wt'(R(A \cup \{x\})).
\end{equation}
Apply Lemma~\ref{lem:matching} to the independent sets
$V, R(A)$ in $\mtd/A$ to obtain a bijection $\phi$ such 
that $\wt'(\phi(x)) \geq \wt'(x)$ 
and $A \cup (R(A) - \phi(x)) \cup \{x\} \in \isets$ for all $x \in V$.
By definition of $R(\cdot)$, we know that
$A \cup \{x\} \cup R(A \cup \{x\})$ is the maximum weight 
independent subset of $A \cup \{x\} \cup B$ that 
contains $A \cup \{x\}$.  One such set is
$A \cup (R(A) - \phi(x)) \cup \{x\}$, so
\begin{align}
\nonumber
\wt'(A) + \wt'(R(A)) - \wt'(\phi(x)) + \wt'(x) &\leq
\wt'(A) + \wt'(R(A \cup \{x\})) + \wt'(x) \\
\nonumber
\wt'(R(A)) - \wt'(R(A \cup \{x\})) & \leq
\wt'(\phi(x)) \\
\sum_{x \in V} \wt'(R(A)) - \wt'(R(A \cup \{x\})) & \leq
\sum_{x \in V} \wt'(\phi(x)) = \wt'(R).
\label{eq:done-2}
\end{align}
The proposition follows by combining~\eqref{eq:done-1}
and~\eqref{eq:done-2}.
\end{proof}

\section{Matroid intersections}
\label{sec:mat-int}

Our algorithm and proof for matroid intersections is quite similar. We need to modify some definitions and extend some proofs, but the spirit is the same.

\subsection{A generalization of $\alpha$-balanced thresholds}
We first have to extend our notation a bit. Denote the independent sets for the $p$ matroids as $\isets_1,\ldots,\isets_p$. Denote the ``truly independent'' sets as $\isets = \cap_j \isets_j$. Still let $\wt, \wt' : \gset \to \R_+$ 
denote two assignments of weights to $\gset$, 
both sampled indepedently from the given distribution.
We consider running the algorithm on an input sequence
$\iseq = (x_1,\wt(x_1)), \ldots, (x_n, \wt(x_n))$
and comparing the value of its selected set,
$A=A(\iseq)$, with that of the $B \in \isets$ that maximizes
$\wt'(B)$.  For all $j$, the matroid exchange axiom ensures
that there is at least one way to
partition $B$ into disjoint subsets $C_j,R_j$
such that $A \cup R_j \in \isets_j$, and $B \subseteq \matspan_j(A \cup R_j)$.
Among all such partitions, let $C_j(A),R_j(A)$ denote
the one that maximizes $\wt'(R_j)$ (greedily add elements from $B$ to $R_j$ unless it creates a dependency in $\isets_j$). We denote by $R(A) = \cap_j R_j(A)$ and $C(A) = \cup_j C_j(A)$. 

\begin{definition} \label{def:alphaintersection}
For a parameter $\alpha>0$, a deterministic
monotone algorithm has \emph{$\alpha$-balanced thresholds}
if it has the following
property.  For every input sequence $\iseq$, if 
$A = A(\iseq)$ and $V$ is a set disjoint from $A$ 
such that $A \cup V \in \isets$, then
\begin{align}
\label{eq:alphaintersection}
\sum_{x_i \in A} T_i(\iseq) & \geq 
\left( \frac{1}{\alpha} \right) \cdot \expect{\sum_j\wt'(C_j(A))} \\
\label{eq:betaintersection}
\sum_{x_i \in V} T_i(\iseq) & \leq 
\left( \frac{1}{\alpha} \right) \cdot \expect{\sum_j \wt'(R_j(A))},
\end{align}
where the expectation is over the random choice of $\wt'$.
\end{definition}

\begin{proposition} \label{prop:alphabetaintersection}
If a monotone algorithm has $\alpha$-balanced thresholds for $\alpha \geq 2$,
then it satisfies the following approximation
guarantee against weight-adaptive adversaries when $\isets$ is the intersection of $p$ matroids:
\begin{equation} \label{eq:apxintersection}
\expect{\wt(A)} \geq 
\frac{\alpha-p}{\alpha(\alpha-1)} \opt.
\end{equation}
\end{proposition}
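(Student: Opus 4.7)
The plan is to mirror the three-step proof of Proposition~\ref{prop:alphabeta}, invoking property beta with $V = R(A) = \cap_j R_j(A)$, and then close with a case analysis since a single chain of inequalities no longer suffices for $p > 1$. First I would verify $V = R(A)$ is admissible: it is disjoint from $A$ by construction, and $A \cup R(A) \in \isets$ because $A \cup R(A) \subseteq A \cup R_j(A) \in \isets_j$ for every $j$. The fact that $R(A)$ depends on $\wt'$ is handled exactly as in the single-matroid proof, by applying property beta pointwise for each realization of $\wt'$ and then taking expectation; this works because the right-hand side $\tfrac{1}{\alpha}\expect{\sum_j \wt'(R_j(A))}$ does not depend on the particular $V$ being plugged in.

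Writing $\wt(A) = \sum_{x_i \in A} T_i + \sum_i (\wt_i - T_i)^+$ and using the distributional identity $\expect{\sum_i (\wt_i - T_i)^+} = \expect{\sum_i (\wt'(x_i) - T_i)^+}$ (inherited verbatim from the single-matroid proof, since the weight-adaptive adversary learns no weight before revealing its element), I obtain two lower bounds on $\expect{\wt(A)}/\opt$. Property alpha alone gives $L_1 = c/\alpha$, and the full three-step scheme with $V = R(A)$ gives $L_2 = (c-r)/\alpha + r^*$, where $c = \expect{\sum_j \wt'(C_j(A))}/\opt$, $r = \expect{\sum_j \wt'(R_j(A))}/\opt$, and $r^* = \expect{\wt'(R(A))}/\opt$. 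Three structural identities then pin down $(c,r,r^*)$: (i) $c+r=p$, since $\wt'(C_j(A))+\wt'(R_j(A)) = \opt(\wt')$ for every $j$; (ii) $c^*+r^*=1$ with $c^* := \expect{\wt'(C(A))}/\opt$, since $C(A)$ and $R(A)$ partition the optimal basis $B$; and (iii) $c^* \leq c$ via the subadditivity $\wt'(\cup_j C_j(A)) \leq \sum_j \wt'(C_j(A))$. Combining (ii) and (iii) yields the key inequality $r^* \geq 1 - c = r - (p-1)$.

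To finish, I split at $r_0 = \alpha(p-1)/(\alpha-1)$, the unique value where $L_1$ equals the target $\frac{\alpha-p}{\alpha(\alpha-1)}$ (and where, as one verifies, $L_2$ does too when $r^* = r - (p-1)$). If $r \leq r_0$, then $c = p-r \geq (\alpha-p)/(\alpha-1)$, so $L_1$ already dominates the target. If $r > r_0$, plugging $r^* \geq r - (p-1)$ into $L_2$ and clearing denominators reduces the target inequality to $(\alpha-1)(\alpha-2)\,r \geq \alpha(\alpha-2)(p-1)$, which follows from the case assumption for $\alpha > 2$ and collapses to $0 \geq 0$ when $\alpha = 2$. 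The main obstacle I anticipate is not any single calculation but rather isolating the right structural inequalities relating $c,r,r^*,c^*$ (especially $c^* \leq c$ from subadditivity) and locating the correct threshold $r_0$; once these are in place, the remaining algebra is a routine case check.
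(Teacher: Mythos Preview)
Your proposal is correct and rests on exactly the same ingredients as the paper: the three inequalities \eqref{eq:step1intersection}--\eqref{eq:step3intersection} (with $V=R(A)$ in property~\eqref{eq:betaintersection}), the identity $c+r=p$, and the subadditivity bound $c^* \le c$ giving $r^* \ge 1-c$. The only difference is in how you combine them at the end. The paper does it in one shot: it forms the weighted sum $\eqref{eq:step1intersection}+\eqref{eq:step2intersection}+\tfrac{1}{\alpha-1}\eqref{eq:step3intersection}$, drops the leftover nonnegative term $\tfrac{\alpha-2}{\alpha-1}\expect{\sum_{x_i\in R(A)}(\wt'(x_i)-T_i)^+}$ (this is where $\alpha\ge 2$ enters), and after substituting $r=p-c$ obtains $\expect{\wt(A)}/\opt \ge \tfrac{c+r^*}{\alpha-1}-\tfrac{p}{\alpha(\alpha-1)} \ge \tfrac{1}{\alpha-1}-\tfrac{p}{\alpha(\alpha-1)}$, the last step using $c+r^*\ge 1$. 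You instead take $\max(L_1,L_2)$ and split at the crossover value $r_0$, which is equivalent (any convex combination of $L_1$ and $L_2$ lies below the max) but slightly longer. Your explicit isolation of the inequality $c^*\le c$ is a nice touch---the paper uses it too, but without naming it.
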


The proof closely parallels the proof of Proposition~\ref{prop:alphabeta},
and is given in the appendix.

\subsection{Obtaining $\alpha$-balanced thresholds}
This section presents an algorithm obtaining $\alpha$-balanced thresholds for any $\alpha > 1$. One can take a derivative to see that the optimal choice of $\alpha$ for the intersection of $p$ matroids is $\alpha_p = p + \sqrt{p(p-1)}$. For simplicity, we will instead just use $\alpha = 2p$, as this is nearly optimal and always at least $2$. When $\alpha = 2p$, the approximation guarantee from Proposition \ref{prop:alphabetaintersection} is $\frac{1}{4p-2}$.

We now define our thresholds. Let 
\begin{align*}
T(A,i,j) &= \frac{1}{\alpha}\expect{w'(R_j(A)) - w'(R_j(A\cup\{x_i\}))} \\
&= \frac{1}{\alpha} \expect{w'(C_j(A \cup \{x_i\})) - w'(C_j(A))} \\
T(A,i) &= \sum_j T(A,i,j).
\end{align*}
In step $i$, having already selected the (possibly empty) set $A_{i-1}$, we set threshold $T_i = \infty$ if $A_{i-1} \cup \{i\} \notin \isets$, and $T_i = T(A_{i-1},i)$ otherwise. In other words, each $T(A,i,j)$ is basically the same as the threshold used for the single matroid algorithm if $\isets_j$ was the only matroid constraint. It is not exactly the same, because $R(A)$ when $\isets_j$ is the only matroid is not the same as $R_j(A)$ in the presence of other matroid constraints. $T(A,i)$ just sums $T(A,i,j)$ over all matroids.\\
\\
The proof of Equation \eqref{eq:alphaintersection} follows exactly the proof of Equation \eqref{eq:alpha}.\\
\\
The proof of Equation \eqref{eq:betaintersection} follows from Proposition \ref{prop:done}, although perhaps not obviously. As $A \cup V \in \isets$, we clearly have $A \cup V \in \isets_j$ for all $j$. So the hypotheses of Proposition \ref{prop:done} are satisfied for all $j$. Summing the bound we get in Proposition \ref{prop:done} over all $j$ gives us Equation \eqref{eq:betaintersection}.

\section{Lower Bounds} \label{sec:lb}
Here we provide two examples. The first is the well-known example of~\cite{KS77} showing that the factor of $2$ is tight for matroids.  We present their construction here for completeness. The second shows that the ratio $O(p)$ is tight for the intersection of $p$ matroids.

We start with the well-known example of \cite{KS77}. Consider the $1$-uniform matroid over $2$ elements. We have $\wt(1) = 1$ with probability $1$, $\wt(2) = n$ with probability $1/n$ and $0$ otherwise. Then the prophet obtains $2-1/n$ in expectation, but the gambler obtains at most $1$, as his optimal strategy is just to take the first element always.

The example for the intersection of $p$ matroids has appeared in other forms in \cite{BIK,CHMS}. Let $q$ be a prime between $p/2$ and $p$. Then let $\gset = \{(i,j) \,|\, 0 \leq i \leq q^q-1, 0 \leq j \leq q-1\}$. Then let $\isets$ contain all sets of the form $\{(i,j_1),\ldots,(i,j_x)\}$. Now let $w(i,j) = 1$ with probability $1/q$, and $w(i,j) = 0$ otherwise, for all $i,j$. Reveal the elements in any order. No matter what strategy the gambler uses to pick the first element, his optimal strategy from that point on is to just accept every remaining element with the same first coordinate. However the gambler winds up with his first element, he makes at most $1-1/q$ in expectation from the remaining elements he is allowed to pick (as there are at most $q-1$ remaining elements, and each has $\expect{w(i,j)} = 1/q$). Therefore, the expected payoff to the gambler is less than $2$. However, with probability at least $(1-1/e)$, there exists an $i$ such that $w(i,j) = 1$ for all $j$ (as the probability that this occurs for a fixed $i$ is $1/q^q$ and there are $q^q$ different $i$'s). So the expected payoff to the prophet is $\Theta(q)$. 

Finally, we just have to show that $\isets$ can be written as the intersection of $q$ matroids. Let $\isets_x$ be the partition matroid that partitions $\gset$ into $\sqcup_j S_j = \sqcup_j \cup_i \{(i,xi+j \pmod q)\}$, and requires that only one element of each $S_j$ be chosen. Then clearly, $\isets \subseteq \cap_{x \in \mathbb{Z}_q} \isets_x$ as any two elements with the same first coordinate lie in different partitions in each of the $\isets_x$. In addition, $\cap_{x \in \mathbb{Z}_q} \isets_x \subseteq \isets$. Consider any $(i,j)$ and $(i',j')$ with $i \neq i'$. Then when $(j-j') \pmod q = x(i-i') \pmod q$, $(i,j)$ and $(i',j')$ are in the same partition of $\isets_x$. As $q$ is prime, this equation always has a solution. Therefore, we have shown that $\isets = \cap_{x \in \mathbb{Z}_q} \isets_x$, and $\isets$ can be written as the intersection of $q\leq p$ matroids. As the prophet obtains $\Theta(p)$ in expectation, and the gambler obtains less than $2$ in expectation, 
no algorithm can achieve an approximation factor better than $O(p)$.

\section{Interpretation as OPMs} \label{sec:opm}
Here, we describe how to use our algorithm to design OPMs for unit-demand multi-parameter bidders under matroid and matroid intersection feasibility constraints.  We begin by recalling the definition of Bayesian multi-parameter unit-demand mechanism design (BMUMD) from~\cite{CHMS}.  In any such mechanism design problem, there is a set of services, $\gset$, partitioned into disjoint subsets $J_1,\ldots,J_n$, one for each bidder.  The mechanism must allocate a set of services, subject to downward-closed feasibility constraints given by a collection $\isets$ of feasible subsets.  We assume that the feasibility constraints guarantee that no bidder receives more than a single service, i.e.\ that the intersection of any feasible set with one of the sets $J_i$ contains no more than one element.  (If this property is not already implied by the given feasibility constraints, it can be ensured by intersecting the given constraints with one additional partition matroid constraint.)  

As in the work of Chawla et al.~\cite{CHMS}, 
we assume that each bidder $i$'s values for the 
services in set $J_i$ are independent random variables,
and we analyze BMUMD mechanisms for any such distribution
by exploring a closely-related 
single-parameter domain that we denote by $\icopies$.
In $\icopies$ there are $|\gset|$ bidders, each
of whom wants just a single service $x$ and has a value $v_x$
for receiving that service.  The feasibility constraints
are the same in both domains --- the mechanism may select any set of services 
that belongs to $\isets$ --- and the joint distribution of
the values $v_x \; (x \in \isets)$ is the same as well; the only
difference between the two domains is that an individual bidder $i$
in the BMUMD problem becomes a set of competing bidders
(corresponding to the elements of $J_i$) in the domain
$\icopies$.  As might be expected, the increase in competition
between bidders results in an increase in revenue for the
optimal mechanism; indeed, the following lemma from~\cite{CHMS}
will be a key step in our analysis.

\begin{lemma} \label{lem:chms}
Let $\cal A$ be any individually rational and truthful deterministic mechanism for instance $\isets$ of BMUMD. Then the expected revenue of $\cal A$ is no more than the expected revenue of the optimal mechanism for $\icopies$.
\end{lemma}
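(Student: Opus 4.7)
The plan is to construct an individually rational and truthful deterministic mechanism $\mathcal{A}'$ for the single-parameter instance $\icopies$ whose expected revenue matches or exceeds that of $\mathcal{A}$; since $\opt$ for $\icopies$ is certainly at least the revenue of any such $\mathcal{A}'$, the lemma follows. The construction is the natural one: each copy $x \in \gset$ reports its value $v_x$, we group the reports by bidder by assembling $\vec v_i = (v_x)_{x \in J_i}$, we simulate $\mathcal{A}$ on these synthesized BMUMD types, and we award service $x$ to copy $x$ precisely when $\mathcal{A}$ allocates $x$ to bidder $i$. Prices are set using Myerson's critical-value rule for the resulting single-parameter allocation on the copies.

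The first step is to verify that the induced allocation rule on copy $x$ is monotone in $v_x$, with the reports of all other copies held fixed. Equivalently, I need to show that in $\mathcal{A}$, if bidder $i$ gets service $x$ under report $(v_x, v_{-x})$, then bidder $i$ also gets service $x$ under any $(v_x'', v_{-x})$ with $v_x'' > v_x$. I would argue by contradiction: if at the larger report bidder $i$ received a different service $y \in J_i$ at some price $p'$ (or received nothing at all), then applying truthfulness of $\mathcal{A}$ once from type $(v_x, v_{-x})$ and once from type $(v_x'', v_{-x})$ produces an impossible pair of inequalities (the ``receives nothing'' branch additionally invokes individual rationality). Once monotonicity is in hand, Myerson's lemma certifies that $\mathcal{A}'$ is truthful and individually rational for every copy.

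The crux of the proof is then the revenue comparison. I plan to argue that whenever $\mathcal{A}$ allocates service $x$ to bidder $i$, its charge $p_i$ is in fact a function only of $v_{-x}$ and not of $v_x$: for any two reports $(v_x, v_{-x})$ and $(v_x', v_{-x})$ that both induce allocation of service $x$ to bidder $i$, two applications of truthfulness (swapping which report is pretended and which is the true type) force the two prices to agree. Individual rationality of $\mathcal{A}$ then gives $p_i \leq v_x'$ for every $v_x'$ in the ``winning region'' for service $x$, so taking the infimum of this region shows that $p_i$ is at most the critical value $c_x$ that $\mathcal{A}'$ charges copy $x$. Summing over bidders (losing bidders in $\mathcal{A}$ contribute zero on both sides) yields that the expected revenue of $\mathcal{A}'$ is at least that of $\mathcal{A}$, which completes the argument.

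The main obstacle I expect is this last payment-comparison step: the construction and the monotonicity check are standard once one has the right picture, but using truthfulness and individual rationality of the multi-parameter mechanism in concert to pin down that the one-dimensional critical price dominates the original price is the place where both hypotheses of the lemma are essential and where the argument cannot be shortened.
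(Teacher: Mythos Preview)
The paper does not supply its own proof of this lemma; it is quoted verbatim as a result of Chawla et al.~\cite{CHMS} and used as a black box in the revenue chain~\eqref{eq:opm-outline}. Your proposal is, in outline, the standard argument from that paper, and it is correct: simulate $\mathcal{A}$ on the assembled copy reports, verify monotonicity of each copy's allocation via the two crossed incentive-compatibility inequalities (invoking individual rationality only in the ``gets nothing'' branch), apply Myerson's critical-value payments to obtain a truthful $\mathcal{A}'$ for $\icopies$, and finally compare payments by observing that truthfulness forces $\mathcal{A}$'s price for service $x$ to be constant on the winning region while individual rationality bounds it above by every winning $v_x'$, hence by the infimum $c_x$ that $\mathcal{A}'$ charges. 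Nothing in the sketch is wrong, and the place you flag as the ``main obstacle'' is indeed where both hypotheses are used simultaneously.
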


A second technique that we will borrow from~\cite{CHMS} (and, ultimately,
from Myerson's original paper on optimal mechanism design~\cite{Mye81}), is 
the technique of analyzing the expected revenue of mechanisms indirectly
via their virtual surplus.  We begin by reviewing the definitions of
virtual valuations and virtual surplus.  Assume that $v_x$, the value of 
bidder $i$ for item $x \in J_i$, has cumulative distribution function
$F_x$ whose density $f_x$ is well-defined and positive on the interval
on which $v_x$ is supported.  Then the virtual valuation function 
$\phi_x$ is defined by
\[
\phi_x(v) = v - \frac{1 - F_x(v)}{f_x(v)},
\]
and the \emph{virtual surplus} of an
allocation $A \in \isets$ is defined to be the sum
$\sum_{x \in A} \phi_x(v_x)$.
Myerson~\cite{Mye81} proved the following:

\begin{lemma} \label{lem:mye}
In single-parameter domains whose bidders have independent
valuations with monotone increasing virtual valuation
functions, the expected revenue of any mechanism in
Bayes-Nash equilibrium is equal to its expected virtual
surplus.
\end{lemma}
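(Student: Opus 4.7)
The plan is to reduce to a direct-revelation, truthful mechanism via the revelation principle, derive the classical payment identity from incentive compatibility, and then integrate it against the value density using Fubini so that Myerson's virtual valuation appears on the right-hand side.

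\textbf{Setup and payment identity.} Fix a mechanism $\mathcal{A}$ in Bayes--Nash equilibrium with bidder strategies $s_i$. By the revelation principle, I would replace $\mathcal{A}$ by an incentive-compatible, individually rational direct-revelation mechanism whose interim expected allocation $x_i(v_i)$ and interim expected payment $p_i(v_i)$ for each bidder $i$ coincide with those of $\mathcal{A}$ when bidder $i$ plays $s_i$; this preserves both expected revenue and expected virtual surplus, so it suffices to prove the identity in the truthful direct-revelation setting. Applying the envelope theorem to the indirect utility $u_i(v_i) = v_i\, x_i(v_i) - p_i(v_i)$ then yields monotonicity of $x_i(\cdot)$ and the payment identity
\[
p_i(v_i) = v_i\, x_i(v_i) - \int_{0}^{v_i} x_i(t)\, dt + p_i(0),
\]
with $p_i(0)=0$ enforced by individual rationality at the lowest type (the standard BMUMD normalization).

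\textbf{From payments to virtual surplus.} Taking expectations under $v_i \sim F_i$ and using Fubini to swap the order of integration,
\[
\int f_i(v_i) \int_{0}^{v_i} x_i(t)\, dt\, dv_i \;=\; \int x_i(t)\, (1 - F_i(t))\, dt \;=\; \mathbb{E}\!\left[\frac{1-F_i(v_i)}{f_i(v_i)}\, x_i(v_i)\right],
\]
so that $\mathbb{E}[p_i(v_i)] = \mathbb{E}[v_i\, x_i(v_i)] - \mathbb{E}\!\left[\tfrac{1-F_i(v_i)}{f_i(v_i)} x_i(v_i)\right] = \mathbb{E}[\phi_i(v_i)\, x_i(v_i)]$, using the definition of $\phi_i$ from the excerpt. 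Summing over all bidders and invoking linearity then gives expected revenue $= \sum_i \mathbb{E}[p_i(v_i)] = \mathbb{E}\!\left[\sum_{x \in A} \phi_x(v_x)\right]$, which is exactly the expected virtual surplus.

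\textbf{Where the hypotheses enter.} The assumption that each $v_x$ has a density $f_x$ positive on its support is what lets me write the envelope identity in integral form, so that $u_i$ is absolutely continuous with a.e.\ derivative equal to $x_i$; without this, the payment formula still holds but requires a Stieltjes-integral treatment. Interestingly, the monotonicity of $\phi_i$ stated in the hypothesis is not actually used in the identity above; it is needed only for the companion Myerson theorem asserting that the revenue-optimal mechanism pointwise maximizes virtual surplus subject to feasibility (monotone $\phi_i$ ensures the pointwise maximizer is itself monotone and thus implementable). The most delicate step in the outline is justifying the envelope representation and the boundary condition $p_i(0)=0$ --- everything else is mechanical integration --- so that is where I would expect to spend the most care, though both are entirely standard in the Myerson framework.
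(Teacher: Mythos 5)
The paper does not prove Lemma~\ref{lem:mye} at all: it is imported verbatim from Myerson \cite{Mye81}, so there is no in-paper argument to compare against. Your proposal is the standard and essentially correct derivation of that classical fact --- revelation principle to pass to a BIC direct mechanism with the same interim allocations and payments, the Myerson/envelope payment identity $p_i(v_i)=v_i x_i(v_i)-\int_0^{v_i}x_i(t)\,dt+p_i(0)$, and a Fubini swap turning $\int x_i(t)(1-F_i(t))\,dt$ into $\mathbb{E}\bigl[\tfrac{1-F_i(v_i)}{f_i(v_i)}x_i(v_i)\bigr]$, which is exactly where the paper's definition of $\phi_x$ enters. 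Two small points of care. First, individual rationality at the lowest type alone only gives $p_i(0)\le 0$ (equivalently $u_i(0)\ge 0$); to conclude $p_i(0)=0$ you also need the no-positive-transfers normalization, i.e.\ that the mechanism never pays bidders --- without it the correct statement is revenue $=$ virtual surplus $-\sum_i u_i(\text{lowest type})$, and the lemma as stated in the paper silently makes the same normalization, so your proof matches the statement but you should name that extra assumption explicitly rather than attribute it to IR. Second, your remark that monotonicity of $\phi_x$ is not needed for the identity is accurate; in this paper the regularity hypothesis earns its keep elsewhere, namely in making the posted prices $\phi_x^{-1}(T(A,x))$ well defined and in the equivalence $b_x\ge p_x \iff w(x)\ge T(A,x)$ used in the proof of Theorem~\ref{thm:opm}, and in the step $\Phi(\optcopies)=R(\optcopies)$ of the chain \eqref{eq:opm-outline}. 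With the normalization stated precisely, your argument is a complete proof of the lemma the paper cites.
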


The distribution of $v_x$ is said to be \emph{regular}
when the virtual valuation function $\phi_x$ is monotonically increasing.
We will assume throughout the rest of this section that
bidders' values have regular distributions, in order to
apply Lemma~\ref{lem:mye}.  To deal with non-regular 
distributions, it is necessary to use a 
technique known as \emph{ironing},
also due to Myerson~\cite{Mye81}, which
in our context translates into randomized
pricing via a recipe described in Lemma~2 of~\cite{CHMS}.

Our plan is now to design
truthful mechanisms $\mech$ and $\mcopies$ 
for the BMUMD domain $\isets$
and the associated single-parameter domain $\icopies$,
respectively, and to relate them to the optimal mechanisms
for those domains
via the following chain of inequalities.
\begin{equation} \label{eq:opm-outline}
R(\mech) \geq R(\mcopies) = \Phi(\mcopies) \geq
\frac{1}{\alpha} \Phi(\optcopies) =
\frac{1}{\alpha} R(\optcopies) \geq
\frac{1}{\alpha} R(\opt).
\end{equation}
Here, $R(\cdot)$ and $\Phi(\cdot)$ denote the 
expected revenue and expected virtual surplus of
a mechanism, respectively, and
$\alpha$ denotes the approximation guarantee of
a prophet inequality algorithm embedded in our mechanism.
Thus, $\alpha=2$ when $\isets$ is a matroid, and more
generally $\alpha = 4p-2$ when $\isets$ is given by an
intersection of $p$ matroid constraints.

Most of the steps in line~\eqref{eq:opm-outline} are
already justified by the lemmas from prior work
discussed above.  The relation $R = \Phi$ for mechanisms
$\mcopies$ and $\optcopies$ is a consequence of 
Lemma~\ref{lem:mye}, while the relation 
$R(\optcopies) \geq R(\opt)$ is Lemma~\ref{lem:chms}.
We will naturally derive the relation 
$\Phi(\mcopies) \geq \frac{1}{\alpha} \Phi(\optcopies)$
as a consequence of the prophet inequality.  To do so, it suffices
to define mechanism $\mcopies$ such that its allocation
decisions result from running the prophet inequality 
algorithm on an input sequence consisting of the virtual
valuations $\phi_x(v_x)$, presented in an order 
determined by an online weight-adaptive adversary.  
The crux of our proof will consist of designing said
adversary to ensure that the relation 
$R(\mech) \geq R(\mcopies)$ also holds.

Given these preliminaries, we now describe the 
mechanisms $\mech$ and $\mcopies$.  Central to 
both mechanisms is a pricing scheme using thresholds 
$T(A,x)$, defined as the threshold 
$T_s$ that our online algorithm would use at step $s$ 
when $x_s = x$ and the algorithm has accepted the set 
$A$ so far.  
(Contrary to previous sections of
the paper in which steps of the online algorithm's execution
were denoted by $i$, here we reserve the variable $i$ to 
refer to bidders in the mechanism, using $s$ instead to 
denote a step of the online algorithm.
Note that the thresholds assigned by our algorithm
depend only on $A$ and $x$, not on $s$, hence the notation 
$T(A,x)$ is justified.)  Mechanism $\mech$, described by
the pseudocode in Algorithm~\ref{alg:OPM}, simply makes
posted-price offers to bidders $1,2,\ldots,n$ in that order,
defining the posted price for each item by applying its
inverse-virtual-valuation function to the threshold
that the prophet inequality algorithm sets for that item.

\begin{algorithm}[tb]
\caption{Mechanism $\mech$ for unit-demand multi-dimensional bidders}
\begin{algorithmic}[1]\label{alg:OPM}
\STATE Initialize $A = \emptyset$.
\FOR{$i=1,2,\ldots,n$}
\FORALL{$x \in J_i$}
\STATE Set price $p_x = 
\begin{cases}
\phi_x^{-1}(T(A,x)) & \mbox{if $A \cup \{x\} \in \isets$} \\
\infty & \mbox{otherwise.}
\end{cases}$
\ENDFOR
\STATE Post price vector $(p_x)_{x \in J_i}$.
\STATE Bidder $i$ chooses an element $x \in J_i$ (or nothing) 
at these posted prices.
\IF{$x$ is chosen}
\STATE Allocate $x$ to bidder $i$ and charge price $p_x$.
\STATE $A \gets A \cup \{x\}$
\ELSE
\STATE Allocate nothing to bidder $i$ and charge price 0.
\ENDIF
\ENDFOR
\end{algorithmic}
\end{algorithm}

To define mechanism $\mcopies$, we first define an online 
weight-adaptive adversary and then run the prophet inequality
algorithm on the input sequence presented by this adversary,
using its thresholds to define posted prices exactly as in 
mechanism $\mech$ above.  The adversary is designed to 
minimize the mechanism's revenue, subject to the constraint
that the elements are presented in an order that runs 
through all of the elements of $J_1$, then the elements
of $J_2$, and so on.  In fact, it is easy to compute this
worst-case ordering by backward induction, which yields
a dynamic program presented in pseudocode 
as Algorithm~\ref{alg:adversary}.  The dynamic programming
table consists of entries 
$V(A,i)$ denoting the expected revenue that 
$\mcopies$ will gain from selling elements of
the set $J_{i+1} \cup \cdots \cup J_n$, given that
it has already allocated the elements of $A$.
Computing and storing these values requires
exponential time and space, but we are not
concerned with making $\mcopies$ into a
computationally efficient mechanism because its role 
in this paper is merely to provide an intermediate
step in the analysis of mechanism $\mech$.

The formula for $V(A,i)$ is guided by the following
considerations.  Since $\mcopies$ will post prices
$p_x=\phi_x^{-1}(T(A,x))$ for all $x \in J_{i+1}$ 
given that it has already 
allocated $A$, it will not allocate any
element of $J_{i+1}$ if $v_x < p_x$ for all
$x \in J_{i+1}$, and otherwise it will
allocate some element  $x \in J_{i+1}$.
In the former case, its expected revenue from the remaining
elements will be $V(A,i+1)$.  In the latter case,
it extracts revenue $p_x$ from bidder $i+1$ and
expected revenue
$V(A \cup \{x\},i+1)$ from the remaining bidders.
Thus, an adversary who wishes the minimize the
revenue obtained by the mechanism will order the
elements  $x \in J_{i+1}$ in increasing order of
$p_x + V(A \cup \{x\},i+1)$.  Denoting the elements
of $J_{i+1}$ in this order by $x_1, x_2, \ldots, x_k$,
we obtain the formula
\begin{equation} \label{eq:vai}
V(A,i) = \left( \prod_{j=1}^k F_{x_j}(p_{x_j}) \right) \cdot V(A,i+1) 
\;\; + \;\;
\sum_{\ell=1}^k \left( \prod_{j = 1}^{\ell-1} F_{x_j}(p_{x_j}) \right)
\cdot (1 - F_{x_\ell}(p_{x_\ell})) \cdot (p_{x_\ell} + V(A \cup \{x_\ell\}, \, i+1)).
\end{equation}
The first term on the right side accounts for the possibility that bidder $i+1$
buys nothing, while the sum accounts for the possibility that 
bidder $i+1$ buys $x_\ell$, for each $\ell=1,\ldots,k$.

\begin{algorithm}[tb]
\caption{Online weight-adaptive adversary for $\icopies$}
\begin{algorithmic}[1]\label{alg:adversary}
\FOR[Preprocessing loop: fill in dynamic programming table]{$i=n,n-1,\ldots,1$}
\FORALL{feasible sets $A \subseteq J_1 \cup \cdots \cup J_i$}
\IF{$i=n$}
\STATE $V(A,i) = 0$
\ELSE
\STATE $p_x = \phi_x^{-1}(T(A,x))$ for all $x \in J_{i+1}$.
\STATE Sort $J_{i+1}$ in order of increasing $p_x + V(A \cup \{x\},i+1)$.
\STATE Denote this sorted list by $x_1,\ldots,x_k$.
\STATE Compute $V(A,i)$ using formula~\eqref{eq:vai}.
\ENDIF
\ENDFOR
\ENDFOR
\STATE \COMMENT{Main loop: choose the ordering of each set $J_i$}
\STATE Initialize $A = \emptyset$.
\FOR{$i=1,\ldots,n$}
\STATE $p_x = \phi_x^{-1}(T(A,x))$ for all $x \in J_i$.
\STATE Sort the elements of $J_i$ in order of increasing 
$p_x + V(A \cup \{x\},i-1)$.
\STATE Present the elements of $J_i$ to the online algorithm 
in this order.
\IF{$\exists x \in J_i \mbox{ s.t. } v_x \geq p_x$}
\STATE Find the first such $x$ in the ordering of $J_i$, and insert $x$ into $A$.
\ENDIF
\ENDFOR
\end{algorithmic}
\end{algorithm}

\begin{algorithm}[tb]
\caption{Mechanism $\mcopies$ for single-parameter domain $\icopies$.}
\begin{algorithmic}[1]\label{alg:mcopies}
\STATE \COMMENT{Set prices using adversary
coupled with online algorithm}
\STATE Obtain bids $b_x$ for all bidders $x \in \gset$.
\STATE Run Algorithm~\ref{alg:adversary}, using $v_x=b_x$ for all $x$,
to obtain an ordering of $\gset$.
\STATE Set $w(x) = \phi_x(b_x)$ for all $x \in \gset$.
\STATE Present the pairs $(x,w(x))$ to the prophet inequality
algorithm, in the order computed above.
\STATE Obtain thresholds $T(A,x)$ from the prophet inequality algorithm.
\STATE Set price $p_x = \phi_x^{-1}(T(A,x))$ for all $x \in \gset$.
\STATE \COMMENT{Determine allocation and payments}
\STATE Initialize $A = \emptyset$
\FOR{$i=1,\ldots,n$}
\FORALL{$x \in J_i$}
\IF{$b_x \geq p_x$ and $b_y < p_y$ for all $y \in J_i$ that precede
$x$ in the ordering}
\STATE Add $x$ to the set $A$.
\STATE Charge price $p_x$ to bidder $x$.
\ENDIF
\ENDFOR
\ENDFOR
\end{algorithmic}
\end{algorithm}

Mechanism $\mcopies$ has already been described above, and is
specified by pseudocode in Algorithm~\ref{alg:mcopies}.
We note that $\mcopies$ does not satisfy the definition 
of an OPM in~\cite{CHMS}, since the price $p_x$ for $x \in J_i$
may depend on the bids $b_y$ for $y \in J_1 \cup \cdots \cup J_{i-1}$.
However, it retains a key property of OPMs that make them suitable
for analyzing multi-parameter mechanisms: the prices of elements
of $J_i$ are predetermined before any of the bids in $J_i$ are
revealed.

\begin{theorem} Mechanism $\mech$ for BMUMD settings with independent regular valuations obtains a $2$-approximation to the revenue of the optimal deterministic mechanism for matroid feasibility constraints, and a $(4p-2)$-approximation to the revenue of the optimal deterministic mechanism for feasibility constraints that are the intersection of $p$ matroids.
\label{thm:opm}
\end{theorem}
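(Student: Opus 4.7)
The plan is to establish the chain of inequalities displayed in~\eqref{eq:opm-outline}, with $\alpha = 2$ in the matroid case and $\alpha = 4p-2$ in the matroid-intersection case. Most of the steps have already been laid out in advance of the theorem. The two equalities $R(\mcopies) = \Phi(\mcopies)$ and $\Phi(\optcopies) = R(\optcopies)$ are instances of Myerson's Lemma~\ref{lem:mye}, which applies because both $\mcopies$ and $\optcopies$ are truthful and individually rational in the single-parameter domain $\icopies$ --- for $\mcopies$ this requires observing that the price $p_x$ it charges depends only on bids preceding $x$ in the adversary's ordering, so it is a sequential posted-price mechanism in disguise. The last inequality $R(\optcopies) \geq R(\opt)$ is exactly Lemma~\ref{lem:chms}.

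Next I would dispatch the middle inequality $\Phi(\mcopies) \geq \frac{1}{\alpha}\Phi(\optcopies)$ by invoking Proposition~\ref{prop:alphabeta} (or Proposition~\ref{prop:alphabetaintersection} in the intersection case) with the weight assignment $\wt(x) = \max(\phi_x(v_x), 0)$. Regularity makes each $\phi_x$ monotone, so the event $\{v_x \geq p_x = \phi_x^{-1}(T(A,x))\}$ coincides with $\{\phi_x(v_x) \geq T(A,x)\}$ and hence with $\{\wt(x) \geq T(A,x)\}$; this means $\mcopies$'s allocation decisions exactly mirror the prophet-inequality algorithm's decisions when it is run on the weight sequence produced by the online weight-adaptive adversary of Algorithm~\ref{alg:adversary}. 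The propositions then deliver $\Phi(\mcopies) \geq \frac{1}{\alpha}\,\expect{\max_{B \in \isets}\sum_{x \in B}\wt(x)} = \frac{1}{\alpha}\Phi(\optcopies)$, the last equality using the standard observation that an optimal mechanism never allocates to a bidder with negative virtual value (which it can always avoid in a matroid setting).

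The main obstacle is the first inequality $R(\mech) \geq R(\mcopies)$, which I would prove by backward induction on the bidder index $i$, with inductive hypothesis: for every feasible already-allocated set $A \subseteq J_1 \cup \cdots \cup J_i$, the expected revenue of $\mech$ from bidders $i+1,\ldots,n$ given $A$ is at least $V(A, i)$ (the value computed by Algorithm~\ref{alg:adversary}). The base case $i = n$ is immediate. For the step, fix a realization of $(v_x)_{x \in J_{i+1}}$ and let $X(v) = \{x \in J_{i+1} : v_x \geq p_x\}$ denote the set of items the unit-demand bidder would accept at the posted prices. The crucial structural observation is that the adversary sorts $J_{i+1}$ in increasing order of $p_x + V(A \cup \{x\}, i+1)$, so the first element of $X(v)$ in the adversary's ordering --- exactly what $\mcopies$ allocates --- achieves the \emph{minimum} of $p_x + V(A \cup \{x\}, i+1)$ over $x \in X(v)$. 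Meanwhile $\mech$'s unit-demand bidder picks some $x^* \in X(v)$ (the utility-maximizer). Combining the inductive hypothesis with the trivial bound $p_{x^*} + V(A \cup \{x^*\}, i+1) \geq \min_{x \in X(v)}(p_x + V(A \cup \{x\}, i+1))$ shows that $\mech$'s revenue from step $i+1$ onward pointwise dominates $\mcopies$'s in $v$. Taking expectations (and handling $X(v) = \emptyset$ separately, where both mechanisms allocate nothing and the hypothesis applies at $A$) completes the induction; setting $A = \emptyset$, $i = 0$ yields the claim.

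The heart of the argument, once the chain of inequalities is laid out, is really recognizing why the greedy sort inside Algorithm~\ref{alg:adversary} is designed the way it is: by placing items of smallest $p_x + V(A \cup \{x\}, i+1)$ first, the adversary forces the ``first acceptable item'' rule to select, for every realization of $v$, the \emph{worst} element of $X(v)$ from a revenue standpoint. This makes $\mcopies$'s per-realization revenue a lower bound on the revenue collected by \emph{any} choice rule a unit-demand bidder could employ against the same price vector --- precisely the comparison needed to dominate $\mcopies$ by $\mech$.
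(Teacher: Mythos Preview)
Your proposal is correct and follows essentially the same route as the paper: both establish the chain~\eqref{eq:opm-outline} by (i) invoking the prophet-inequality guarantee against the online weight-adaptive adversary of Algorithm~\ref{alg:adversary} for $\Phi(\mcopies)\geq\frac{1}{\alpha}\Phi(\optcopies)$, and (ii) proving $R(\mech)\geq R(\mcopies)$ via backward induction on $i$, conditioning on the set $X(v)$ (the paper calls it $K$) of above-threshold items in $J_{i+1}$ and using that the adversary's increasing sort by $p_x+V(A\cup\{x\},i+1)$ forces $\mcopies$ to take the minimizer. Your use of $\wt(x)=\max(\phi_x(v_x),0)$ is slightly more careful than the paper's $\wt(x)=\phi_x(b_x)$ in handling negative virtual values, but otherwise the arguments coincide.
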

\begin{proof} 
Both $\mech$ and $\mcopies$ are posted-price (hence, truthful) mechanisms that 
always output a feasible allocation.  To prove that the 
allocation is always feasible, one can argue by contradiction:
if not, there must be a step in which the set $A$ becomes infeasible
through adding an element $x$.  However, in both $\mech$ and $\mcopies$,
we can see that the price $p_x$ is infinite in that
case, while bid $b_x$ is greater than or equal to $p_x$, a
contradiction.

The proof of the approximate revenue guarantee follows the outline
 given by equation~\eqref{eq:opm-outline} above.
As explained earlier, the only two steps in that equation that do not
follow from prior work are the relations 
\begin{align}
\label{eq:opm-1}
R(\mech) & \geq R(\mcopies) \\
\label{eq:opm-2}
\Phi(\mcopies) & \geq \frac{1}{\alpha} \Phi(\optcopies).
\end{align}
To justify the second line, observe that the ``adversary''
(Algorithm~\ref{alg:adversary}) that computes the ordering
of the bids is an online weight-adaptive adversary.  This 
is because the adversary does not need to observe the 
values $v_x \, (x \in J_i)$ in order to sort the elements
of $J_i$ in order of increasing $p_x + V(A \cup \{x\},i-1)$.
Thus, the prophet inequality algorithm running on the input
sequence specified by the adversary achieves an
expected virtual surplus that is at least 
$\frac{1}{\alpha} \Phi(\optcopies)$.  Furthermore,
the set of elements selected by $\mcopies$ is 
exactly the same as the set of elements selected
by the prophet inequality algorithm --- the
criterion $b_x \geq p_x$ is equivalent to the
criterion $w(x) \geq T(A,x)$ because $w(x) = \phi_x(b_x), \,
T(A,x) = \phi_x(p_x)$, and $\phi$ is monotone increasing.
This completes the proof of~\eqref{eq:opm-2}.

To prove~\eqref{eq:opm-1} we use an argument that, in effect,
justifies our claim that Algorithm~\ref{alg:adversary} is a 
worst-case adversary for mechanism $\mcopies$.  Specifically,
for each $i=0,\ldots,n$ and each 
feasible set $A \subseteq J_1 \cup \cdots \cup J_i$,
let $R(\mech,A,i)$ and $R(\mcopies,A,i)$ denote the
expected revenue that $\mech$ (respectively, $\mcopies$)
obtains from selling items in $J_{i+1} \cup \cdots J_n$
conditional on having allocated set $A$ while
processing the bids in $J_1 \cup \cdots \cup J_i$.
(In evaluating the expected revenue of the two mechanisms,
we assume that the bidders are presented to $\mech$ in the
order $i=1,\ldots,n$, and that they are presented to 
$\mcopies$ in the order determined by the adversary,
Algorithm~\ref{alg:adversary}.)
We will prove, by downward induction on $i$, that
$$
\forall i,A \quad R(\mech,A,i) \geq R(\mcopies,A,i) = V(A,i)
$$
and then~\eqref{eq:opm-1} follows by specializing
to $i=0, A=\emptyset$.
When $i=n$, we have $R(\mech,A,i) = R(\mcopies,A,i) = V(A,i) = 0$
so the base case of the induction is trivial.  The 
relation $R(\mcopies,A,i) = V(A,i)$ for $i<n$ 
is justified by the discussion preceding equation~\eqref{eq:vai}.
To prove $R(\mech,A,i) \geq R(\mcopies,A,i)$,
suppose that both mechanisms have
allocated set $A$ while processing the bids 
in $J_1 \cup \cdots \cup J_i$.  Conditional
on the set of $x \in J_{i+1}$ such that $v_x \geq p_x$ 
being equal to any specified set $K$,
we will prove that $\mech$ obtains
at least as much expected revenue as $\mcopies$ from 
selling the elements of $J_{i+1} \cup \cdots \cup J_n$.
If $K$ is empty, then the
two mechanisms will obtain expected revenue
$R(\mech,A,i+1)$ and $R(\mcopies,A,i+1)$, respectively,
from elements of $J_{i+1} \cup \cdots \cup J_n$, and
the claim follows from the induction hypothesis.
Otherwise, $\mcopies$ obtains expected revenue
$\min \{ p_x + V(A \cup \{x\},i+1) \mid x \in K \}$
while $\mech$ obtains expected revenue
$p_y + R(\mech,A \cup \{y\},i+1)$ where 
$y \in K$ is the element of $K$
chosen by bidder $i+1$ when presented with the menu of
posted prices for the elements of $J_{i+1}$.  
The induction hypothesis implies 
$$
p_y + R(\mech,A \cup \{y\},i+1) \geq p_y + V(A \cup \{y\},i+1) \geq
\min \{ p_x + V(A \cup \{x\},i+1) \mid x \in K \},
$$
and this completes the proof.
\end{proof}
\bibliographystyle{plain}
\bibliography{mpi}

\newpage
\appendix

\section{Proof of Proposition~\ref{prop:alphabetaintersection}}
\label{sec:alphabetaintersection-proof}

\begin{proof}
We have
\begin{equation} \label{eq:optintersections}
\opt = \expect{\wt'(C_j(A)) + \wt'(R_j(A))}\ \forall j
\end{equation}
\begin{equation} \label{eq:optintersections2}
\opt = \expect{\wt'(C(A)) + \wt'(R(A))}
\end{equation}
because $C_j(A) \cup R_j(A)$ is a maximum-weight independent set with respect
to $\wt'$ for all $j$, as is $C(A) \cup R(A)$, and $\wt'$ has the same distribution as $\wt$.
The proof will again consist of deriving the following three inequalities.
\begin{align} \label{eq:step1intersection}
\expect{\sum_{x_i \in A} T_i} & \geq \frac{1}{\alpha} \expect{\sum_j\wt'(C_j(A))} \\ 
\label{eq:step2intersection}
\expect{\sum_{x_i \in A} (\wt_i - T_i)^+} & \geq 
\expect{\sum_{x_i \in R(A)} (\wt'(x_i) - T_i)^+} \\
\label{eq:step3intersection}
\expect{\sum_{x_i \in R(A)} (\wt'(x_i) - T_i)^+} & \geq
\expect{\wt'(R(A))} - \frac{1}{\alpha} \expect{\sum_j\wt'(R_j(A))}.
\end{align}

Summing $\eqref{eq:step1intersection} + \eqref{eq:step2intersection} + \frac{1}{\alpha-1}\eqref{eq:step3intersection}$ and using the fact that $T_i + (\wt_i-T_i)^+ = \wt_i$ for all $x_i \in A$, we obtain
\begin{align*}
\expect{\wt(A)} &\geq \left(\frac{1}{\alpha-1} - \frac{1}{\alpha(\alpha-1)}\right)
 \expect{\sum_j\wt'(C_j(A))} + \frac{\alpha-2}{\alpha-1}\expect{\sum_{x_i \in R(A)} (\wt'(x_i) - T_i)^+}\\
&+ \frac{1}{\alpha-1}\expect{\wt'(R(A))}
- \frac{1}{\alpha(\alpha-1)} \expect{\sum_j \wt'(R_j(A))}.
\end{align*}

Subsituting in Equations \eqref{eq:optintersections} and \eqref{eq:optintersections2} (and observing that $\frac{\alpha-2}{\alpha-1} \geq 0$ whenever $\alpha \geq 2$), we get:

$$\expect{\wt(A)} \geq \frac{1}{\alpha-1}\opt - \frac{p}{\alpha(\alpha-1)}\opt = \frac{\alpha - p}{\alpha(\alpha-1)}\opt$$

It remains to show that Equations \eqref{eq:step1intersection} - \eqref{eq:step3intersection} hold for any $\alpha$-balanced thresholds. Equation \eqref{eq:step1intersection} is again a restatement of the definition of $\alpha$-balanced thresholds. Inequality~\eqref{eq:step2intersection} is deduced from the same observations as Equation \eqref{eq:step2}. Finally, as in Proposition \ref{prop:alphabeta}, we
apply Property~\eqref{eq:betaintersection} of $\alpha$-balanced
thresholds,
using the set $V = R(A)$, to deduce that
\begin{align*}
\expect{\sum_{x_i \in R(A)} \wt'(x_i)} & \leq
\expect{\sum_{x_i \in R(A)} T_i} + 
\expect{\sum_{x_i \in R(A)} (\wt'(x_i) - T_i)^+}  \\
& \leq
\frac{1}{\alpha} \, \expect{\sum_j \wt'(R_j(A))}
+
\expect{\sum_{x_i \in R(A)} (\wt'(x_i) - T_i)^+}  \\
\expect{\wt'(R(A))} - \frac{1}{\alpha} \expect{\sum_j \wt'(R_j(A))} & \leq
\expect{\sum_{x_i \in R(A)} (\wt'(x_i) - T_i)^+} 
\end{align*}
Consequently~\eqref{eq:step3intersection} holds, which concludes the proof.

\end{proof}

\end{document}